\setlist{nosep}
\theoremstyle{definition}
\newtheorem{defin}{Definition}[section]
\theoremstyle{plain}
\newtheorem{theo}[defin]{Theorem}
\newtheorem{lem}[defin]{Lemma}
\newtheorem{cor}[defin]{Corollary}
\theoremstyle{definition}
\newtheorem{exm}[defin]{Example}
\newtheorem{rems}[defin]{Remarks}
\newcommand{\mc}{\mathcal}
\newcommand{\D}{\mathcal{D}}
\renewcommand{\H}{\mathcal{H}}
\newcommand{\Sc}{\mathcal{S}}
\newcommand{\B}{\mathcal{B}}
\newcommand{\F}{\mathcal{F}}
\newcommand{\n}[1]{\|#1\|}
\renewcommand{\l}{\langle}
\renewcommand{\r}{\rangle}
\newcommand{\N}{\mathbb{N}}
\newcommand{\R}{\mathbb{R}}
\newcommand{\pin}[2]{\l#1 , #2\r}
\newcommand{\Hil}{\mc H}
\newcommand{\1}{1 \!\! 1}
\newcommand{\Lc}{{\cal L}}
\newcommand{\ltwo}{{\Lc^2(\mathbb{R})}}
\newcommand{\scr}{\Sc(\mathbb{R})}
\newcommand{\ldue}{\Lc^2(\mathbb{R})}
\newcommand{\be}{\begin{equation}}
\newcommand{\en}{\end{equation}}
\newcommand{\bea}{\begin{eqnarray}}
\newcommand{\ena}{\end{eqnarray}}
\newcommand{\beano}{\begin{eqnarray*}}
	\newcommand{\enano}{\end{eqnarray*}}
\newcommand{\bee}{\begin{enumerate}}
	\newcommand{\ene}{\end{enumerate}}
\numberwithin{equation}{section}
\fillast \fontsize{12}{15}\scshape}{\thesection.}{0.8em}{}
\fillast \fontsize{11}{12}\scshape}{\thesubsection.}{0.8em}{}
\begin{document}
	
	\thispagestyle{empty}

	\vspace*{2cm}
	
\begin{center}
	{\Large \bf Some perturbation results for quasi-bases and other sequences of vectors}   \vspace{2cm}\\
	
	{\large Fabio Bagarello}\\
	Dipartimento di  Ingegneria, \\
	Universit\`a di Palermo,\\ I-90128  Palermo, Italy, and\\
	INFN, Sezione di Catania, Italy\\
	e-mail: fabio.bagarello@unipa.it\\
	
	\vspace{4mm}
	
	{\large Rosario Corso}\\
	Dipartimento di Matematica ed Informatica,\\
	Universit\`a di Palermo,\\ I-90123 Palermo, Italy\\
	e-mail: rosario.corso02@unipa.it

\end{center}
	
	\vspace*{0.2cm}
	
	\begin{abstract}
		\noindent 	We discuss some perturbation results concerning certain pairs of sequences of vectors in a Hilbert space $\Hil$ and producing new sequences which share, with the original ones,  { reconstruction formulas on a dense subspace of $\Hil$ or on the whole space}.  We also propose some preliminary results on the same issue, but in a distributional settings.		
	\end{abstract}
	
	\vspace{2cm}
	
	{\bf Keywords}:  Quasi-bases. Reconstruction formulas. Perturbation of quasi-bases. Resolutions of the identity.
	
	\vfill


	\newpage

\section{Introduction and notation}

In this paper we consider different expansions of elements of a Hilbert space, or of some space of distributions, in terms of a sequence in the style of Schauder bases. In particular, we propose some perturbation results for these expansions. Before going into details, let us fix our framework and our notation.   Let $\H$ be a Hilbert space with inner product $\pin{\cdot}{\cdot}$ (which we consider linear in the second component and anti-linear in the first component) and norm $\|\cdot\|$. 

Throughout the paper $\D$ is a dense subspace of $\H$, $\B(\H)$ is the set of bounded operators $T:\H\to \H$. We denote by $T^\dagger$ the adjoint of the operator $T\in \B(\H)$.

We recall that a Schauder basis $\F_\xi=\{\xi_n\}$ for $\H$ is a sequence of vectors $\xi_n$ satisfying 
$$
f=\sum_{n=0}^\infty  \pin{\chi_n}{f}\xi_n, \qquad \forall f\in \H,
$$
with a unique sequence $\F_\chi=\{\chi_n\}$. A special class of Schauder bases for Hilbert spaces is constituted by Riesz bases. A Riesz basis  $\F_\xi=\{\xi_n\}$ is a sequence of elements of $\H$ such that $\xi_n=Te_n$ for every $n\in \N$, where $\F_e=\{e_n\}$ is an orthonormal (o.n.) basis and $T:\H\to \H$ is a bounded and bijective operator. Moreover $\pin{\xi_n}{\widetilde \xi_m}=\delta_{n,m}$ for every $n,m$, where $\F_{\widetilde \xi}=\{\widetilde\xi_n\}$, $\widetilde \xi_n=(T^{-1})^\dagger e_n$ is a Riesz basis, too. 
Equivalently, $\F_\xi$ is a Riesz basis if it is a complete\footnote{A sequence $\F_\xi$ of elements of $\H$ is complete if its linear span is dense in $\H$.} sequence and there exist $A,B>0$ such that 
$$
A \sum_{n=0}^\infty |c_n|^2 \leq \Big \|\sum_{n=0}^\infty c_n \xi_n\Big\|^2 \leq B \sum_{n=0}^\infty |c_n|^2 
$$
for every finite complex sequence $\{c_n\}$\footnote{A complex sequence $\{c_n\}$ is {\it finite} is $c_n\neq 0$ for only a finite number of elements.}.

Frames in Hilbert spaces, introduced by Duffin and Schaffer \cite{dufsch}, are generalizations of Riesz bases and are defined as follows.  A frame $\F_\xi=\{\xi_n\}$ for $\H$ is a sequence of elements of $\H$ such that
$$
A \n{f}^2 \leq \sum_{n=0}^\infty |\pin{\xi_n}{f}|^2 \leq B \n{f}^2, \qquad \forall f\in \H
$$
for some $A,B>0$. A frame $\F_\xi=\{\xi_n\}$ induces a reconstruction formula (strong version of the resolution of the identity) of the type
\begin{equation}
\label{rec_frame}
f=\sum_{n=0}^\infty  \pin{\xi_n}{f}\chi_n=\sum_{n=0}^\infty  \pin{\chi_n}{f}\xi_n, \qquad \forall f\in \H,
\end{equation}
where  $\F_\chi=\{\chi_n\}$ is some other frame for $\H$. Equation \eqref{rec_frame} generalizes the expansion given above for Schauder basis. For instance, the vectors in $\F_\xi$ need not being linearly independent.

Later we will also need the notion of Bessel sequence, which is a sequence of vectors such that only one of the inequalities above for frame holds true. More explicitly, $\F_\xi=\{\xi_n\}$ is a Bessel sequence if $$
\sum_{n=0}^\infty |\pin{\xi_n}{f}|^2 \leq B \n{f}^2, \qquad \forall f\in \H
$$
for some $B>0$.

In this paper we are particularly interested in perturbation results of sequences of vectors. 
For instance, for Schauder bases, Riesz bases and frames the perturbation results in the following lemmas can be found in the literature. The idea of these results is essentially that if a sequence $\F_\xi$ has some property (being a frame, a Riesz basis or a Schauder basis), then a sequence $\F_\varphi$ which is {\em close to} $\F_\xi$ (in a sense we will properly define) has the same property. 

In what follows the following perturbation result for operators will be used heavily.

\begin{lem}[{{\cite[Theorem IV.1.16]{Kato}}}]
	\label{lem_pert_op}
	Let $S,T\in \B(\H)$ be such that $T^{-1}\in \B(\H)$. If $\n{S}< \n{T^{-1}}^{-1}$, then $T+S$ is invertible with $(T+S)^{-1}\in \B(\H)$.  
\end{lem}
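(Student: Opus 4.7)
The plan is to reduce the perturbation problem for $T+S$ to the classical Neumann series argument applied to a small perturbation of the identity. The key observation is the factorisation
\[
T+S = T\bigl(I + T^{-1}S\bigr),
\]
so invertibility of $T+S$ with a bounded inverse follows once $I+T^{-1}S$ is shown to be invertible with bounded inverse, since then $(T+S)^{-1}=(I+T^{-1}S)^{-1}T^{-1}$ is a composition of two bounded operators.

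First I would estimate $\|T^{-1}S\|\leq \|T^{-1}\|\,\|S\|$. The hypothesis $\|S\|<\|T^{-1}\|^{-1}$ therefore gives $\|T^{-1}S\|<1$, so setting $R:=-T^{-1}S$ one has $\|R\|<1$. Next I would define the formal Neumann series
\[
U:=\sum_{n=0}^{\infty} R^n,
\]
and check that this series converges absolutely in the operator norm of $\B(\H)$, because $\sum_{n=0}^\infty \|R\|^n$ is a convergent geometric series. Since $\B(\H)$ is a Banach space with the operator norm, $U\in\B(\H)$.

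Then I would verify by a telescoping argument that $(I-R)U=U(I-R)=I$: multiplying the partial sums $U_N=\sum_{n=0}^N R^n$ by $I-R$ gives $I-R^{N+1}$, and $R^{N+1}\to 0$ in norm since $\|R^{N+1}\|\leq \|R\|^{N+1}\to 0$. Hence $I+T^{-1}S=I-R$ has a bounded two-sided inverse, and composing with $T^{-1}\in\B(\H)$ yields the bounded inverse $(T+S)^{-1}=U\,T^{-1}$.

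I do not expect a real obstacle here: the only subtle point is to make sure the Neumann series is handled in the right order (absolute convergence in the Banach space $\B(\H)$, then identification of the limit as a two-sided inverse via passage to the limit in $(I-R)U_N=I-R^{N+1}$), and to remember that both one-sided identities are needed because $\B(\H)$ is not commutative.
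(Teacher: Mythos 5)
Your proof is correct and is essentially the argument behind the cited result: the paper itself gives no proof, deferring to Kato's Theorem IV.1.16, whose standard proof is exactly this Neumann-series argument applied to $T+S=T(I+T^{-1}S)$ with $\|T^{-1}S\|\le\|T^{-1}\|\,\|S\|<1$. All the steps (absolute convergence of the geometric series in the Banach algebra $\B(\H)$, identification of the limit as a two-sided inverse, composition with $T^{-1}$) are handled properly.
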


In particular, some (well known) consequences of this Lemma are {the following perturbation results for Schauder bases, Riesz bases and frames we mentioned above. }

\begin{lem}[{\cite[Corollary 22.1.5]{Chris}}]
	\label{pert2}
	Let $\F_\xi$ be a frame for $\H$ with bounds $A,B$. Let $\F_\varphi$ be a sequence in $\H$ and assume that there exists $B'<A$ such that
	\begin{equation}\label{11}
	\sum_{n=0}^\infty |\pin{f}{\xi_n-\varphi_n}|^2\leq B' \n{f}^2, \qquad \forall f\in \H. 
	\end{equation}
	Then $\F_\varphi$ is a frame for $\H$ with bounds
	$$
	A\left (1- \sqrt{\frac{B'}{A}}\right )^2, \qquad B\left (1+ \sqrt{\frac{B'}{B}}\right )^2. 
	$$  
	Moreover, if $\F_\xi$ is a Riesz basis for $\H$, then also $\F_\varphi$ is a Riesz basis for $\H$. 
\end{lem}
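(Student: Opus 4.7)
The plan is to read condition~\eqref{11} as saying that the difference sequence $\{\xi_n-\varphi_n\}$ is a Bessel sequence with bound $B'$, and to deduce both frame inequalities for $\F_\varphi$ from those of $\F_\xi$ via the (reverse) triangle inequality in $\ell^2$. Writing $\pin{\varphi_n}{f}=\pin{\xi_n}{f}-\pin{\xi_n-\varphi_n}{f}$ and applying Minkowski on the $\ell^2$ sequence of inner products gives
\[
\left(\sum_{n=0}^\infty|\pin{\varphi_n}{f}|^2\right)^{1/2}\le(\sqrt{B}+\sqrt{B'})\,\n{f}
\]
for every $f\in\H$; squaring produces the claimed upper frame bound $B(1+\sqrt{B'/B})^2$. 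The reverse triangle inequality in $\ell^2$ supplies the matching lower estimate $(\sqrt{A}-\sqrt{B'})\n{f}$, which is strictly positive precisely because $B'<A$, and squaring yields $A(1-\sqrt{B'/A})^2$. This already establishes that $\F_\varphi$ is a frame with the announced bounds.

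For the Riesz basis part, the plan is to pass to the synthesis operators $T_\xi,T_\varphi:\ell^2\to\H$, defined on finite sequences by $T_\eta\{c_n\}=\sum_n c_n\eta_n$ and extended by continuity. A standard duality (synthesis is the adjoint of analysis) converts the Bessel bound $B'$ for $\{\xi_n-\varphi_n\}$ into the operator-norm bound $\n{T_\xi-T_\varphi}\le\sqrt{B'}$. If $\F_\xi=\{Te_n\}$ is a Riesz basis, then a short computation shows $T_\xi=T\circ U^{-1}$, where $U:\H\to\ell^2$ is the unitary sending $e_n$ to the $n$-th canonical vector of $\ell^2$; hence $T_\xi$ is a bounded bijection with $\n{T_\xi^{-1}}^{-1}=\n{T^{-1}}^{-1}\ge\sqrt{A}>\sqrt{B'}$. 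Composing with $U$ on the left to fit the single-space hypothesis of Lemma~\ref{lem_pert_op}, that lemma yields invertibility of $UT_\varphi\in\B(\H)$, whence $T_\varphi:\ell^2\to\H$ is bounded bijective as well. Setting $S=T_\varphi\circ U\in\B(\H)$ one has $Se_n=\varphi_n$, so $\F_\varphi$ is the image of the orthonormal basis $\F_e$ under a bounded bijection of $\H$, i.e.\ a Riesz basis in the sense of the introduction.

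The main obstacle I anticipate is the careful bookkeeping between the analysis side, where the hypothesis~\eqref{11} lives, and the synthesis side, where Lemma~\ref{lem_pert_op} applies. Once $\n{T_\xi-T_\varphi}\le\sqrt{B'}$ is justified from \eqref{11} and $\n{T_\xi^{-1}}^{-1}$ is identified with a quantity no smaller than $\sqrt{A}$, both halves of the conclusion follow in essentially one line each from Minkowski's inequality and the Kato perturbation lemma, respectively.
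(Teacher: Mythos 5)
Your argument is correct. One thing to be aware of: the paper does not actually prove Lemma \ref{pert2} --- it is quoted verbatim from \cite[Corollary 22.1.5]{Chris} and merely described as a ``well known consequence'' of Lemma \ref{lem_pert_op} --- so there is no in-paper proof to measure yours against. Your two-step structure is the standard one and matches both the source and the paper's framing: the Minkowski/reverse-Minkowski estimate on the $\ell^2$ sequences of coefficients yields exactly the constants $A(1-\sqrt{B'/A})^2=(\sqrt A-\sqrt{B'})^2$ and $B(1+\sqrt{B'/B})^2=(\sqrt B+\sqrt{B'})^2$ (note $B'<A$ is needed only to make $\sqrt A-\sqrt{B'}>0$ before squaring, as you say), while the Riesz-basis part via synthesis operators and the Kato perturbation lemma is precisely the mechanism the introduction advertises. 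Two small points worth tightening, neither of which is a gap: (i) you first compose with $U$ on the left (getting $UT_\varphi$ on $\ell^2$) and then on the right (getting $S=T_\varphi U$ on $\H$); either identification works because $U$ is unitary, but you should fix one and state that invertibility of the one is equivalent to invertibility of the other. (ii) The inequality $\n{T_\xi^{-1}}^{-1}=\n{T^{-1}}^{-1}\ge\sqrt A$ deserves its one-line justification: $A\n{f}^2\le\sum_{n}|\pin{Te_n}{f}|^2=\n{T^\dagger f}^2$ by Parseval, hence $\n{(T^\dagger)^{-1}}=\n{T^{-1}}\le A^{-1/2}$. With those two remarks inserted, the proof is complete and self-contained.
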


\begin{lem}\label{pertcn}
	Let $\F_\xi$ and $\F_\varphi$ be two sequences of $\H$, and assume that there exists $0 \leq \alpha <1$ such that 
	\begin{equation}
	\Bigg \|\sum_{n=0}^\infty c_n (\xi_n-\varphi_n)\Bigg \|\leq \alpha \Bigg \| \sum_{n=0}^\infty c_n \xi_n\Bigg \|. 
	\end{equation}
	for every finite complex sequence $\{c_n\}$.
	\begin{enumerate}
		\item If  $\F_\xi$ be a Schauder basis for $\H$, then $\F_\varphi$ is a Schauder basis for $\H$. 
		\item  If $\F_\xi$ is a frame for $\H$ with bounds $A,B$, then $\F_\varphi$ is a frame for $\H$ with bounds
		\begin{equation}
		\label{bounds_pert}
		A (1-  \alpha  )^2, \qquad B (1+  \alpha   )^2. 
		\end{equation}
		\item If $\F_\xi$ is a Riesz basis for $\H$, then also $\F_\varphi$ is a Riesz basis for $\H$ with bounds given by \eqref{bounds_pert}.
	\end{enumerate}
\end{lem}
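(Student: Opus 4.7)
The plan is to construct in each case a bounded invertible operator such that $\varphi_n$ equals its image of $\xi_n$, with the operator close to the identity (norm difference $\leq \alpha$), and then invoke Lemma \ref{lem_pert_op} to transfer the structural property from $\F_\xi$ to $\F_\varphi$. For Part 1, letting $\F_\chi=\{\chi_n\}$ be the biorthogonal system of the Schauder basis $\F_\xi$, for $f=\sum_n\pin{\chi_n}{f}\xi_n\in\H$ I would apply the hypothesis to the tails of the series with coefficients $c_n=\pin{\chi_n}{f}$ to show that $\sum_n\pin{\chi_n}{f}(\varphi_n-\xi_n)$ is Cauchy, hence defines an element $Uf$ with $\n{Uf}\leq\alpha\n{f}$. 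Setting $V=I+U$, Lemma \ref{lem_pert_op} would yield $V$ bounded invertible on $\H$, and biorthogonality gives $V\xi_n=\varphi_n$. Existence of the expansion $f=\sum d_n\varphi_n$ for every $f\in\H$ then follows by applying $V$ to the Schauder expansion of $V^{-1}f$; uniqueness follows by applying $V^{-1}$ to a hypothetical null expansion $\sum d_n\varphi_n=0$ and invoking uniqueness for $\F_\xi$.

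For Part 2, I would pass to the synthesis operators $T_\xi,T_\varphi:\ell^2\to\H$, $T_\xi\{c_n\}=\sum c_n\xi_n$. Since $\F_\xi$ is a frame, $T_\xi$ is bounded and surjective, and the hypothesis, densely extended to all of $\ell^2$, reads $\n{(T_\xi-T_\varphi)c}\leq\alpha\n{T_\xi c}$, which forces $\ker T_\xi\subseteq\ker(T_\xi-T_\varphi)$. This lets me define $\Phi\in\B(\H)$ by $\Phi(T_\xi c)=(T_\xi-T_\varphi)c$ with $\n{\Phi}\leq\alpha$, so that $T_\varphi=(I-\Phi)T_\xi$ and the frame operator factors as $S_\varphi:=T_\varphi T_\varphi^\dagger=(I-\Phi)S_\xi(I-\Phi)^\dagger$, where $S_\xi$ denotes the frame operator of $\F_\xi$. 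Combining $A\n{g}^2\leq\pin{g}{S_\xi g}\leq B\n{g}^2$ with $(1-\alpha)\n{f}\leq\n{(I-\Phi)^\dagger f}\leq(1+\alpha)\n{f}$ in the identity $\pin{f}{S_\varphi f}=\pin{(I-\Phi)^\dagger f}{S_\xi(I-\Phi)^\dagger f}$ would produce the bounds $A(1-\alpha)^2$ and $B(1+\alpha)^2$.

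Part 3 should drop out from the factorization of Part 2: since $(I-\Phi)\in\B(\H)$ is invertible (Lemma \ref{lem_pert_op}) and a Riesz basis is precisely a sequence whose synthesis operator is a bounded bijection $\ell^2\to\H$, the identity $T_\varphi=(I-\Phi)T_\xi$ makes $T_\varphi$ a bounded bijection; the same sandwich estimates give the Riesz bounds \eqref{bounds_pert}. The main obstacle in the whole argument will be the well-definedness of the two key operators: for $U$ one must justify convergence of the defining series on all of $\H$, not only on $\textrm{span}\{\xi_n\}$, using the Schauder expansion together with the Cauchy bound from the hypothesis; for $\Phi$ one must carefully exploit the kernel inclusion to factor through the non-injective $T_\xi$. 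Once these two operators are in hand, everything else is routine algebra based on Lemma \ref{lem_pert_op}.
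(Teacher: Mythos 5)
Your argument is correct, and it is worth noting at the outset that the paper itself contains no proof of Lemma \ref{pertcn}: point 1 is attributed to Paley and Wiener \cite{paleywiener} (see also \cite[Ch.~1, Th.~10]{young}) and points 2 and 3 to \cite[Theorem 22.1.1]{Chris}, so the comparison is with the classical proofs rather than with an internal argument. Your reconstruction follows essentially those classical lines. For point 1 you reproduce the Paley--Wiener argument: applying the hypothesis to the tails with $c_n=\pin{\chi_n}{f}$ gives that $Uf=\sum_n\pin{\chi_n}{f}(\varphi_n-\xi_n)$ converges with $\n{Uf}\le\alpha\n{f}$, biorthogonality gives $(I+U)\xi_n=\varphi_n$, and Lemma \ref{lem_pert_op} makes $I+U$ boundedly invertible, from which existence and uniqueness of the expansions in $\F_\varphi$ follow exactly as you say.

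For points 2 and 3 your route differs slightly in presentation from \cite{Chris}, whose proof of Theorem 22.1.1 (of which this is the special case $\lambda_2=\mu=0$) works with an operator on $\H$ built from the canonical dual frame, $f\mapsto\sum_n\pin{S_\xi^{-1}\xi_n}{f}\varphi_n$, rather than factoring the synthesis operator through $\ell^2$ modulo $\ker T_\xi$. Your factorization $T_\varphi=(I-\Phi)T_\xi$ with $\n{\Phi}\le\alpha$ is clean and gives the bounds \eqref{bounds_pert} directly from the sandwich estimate $(1-\alpha)\n{f}\le\n{(I-\Phi)^\dagger f}\le(1+\alpha)\n{f}$, and point 3 indeed drops out since $I-\Phi$ is invertible and a Riesz basis is characterized by its synthesis operator being a bounded bijection. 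The two steps that genuinely need justification are both handled or easily supplied: the well-definedness of $\Phi$ follows from the kernel inclusion you state, and the boundedness of $T_\varphi$ (needed before extending the hypothesis from finite sequences to all of $\ell^2$ by density) follows from the triangle inequality $\n{T_\varphi c}\le(1+\alpha)\n{T_\xi c}\le(1+\alpha)\sqrt{B}\,\n{c}$, which also re-derives the Bessel bound $B(1+\alpha)^2$. I see no gap.
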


Lemma \ref{pertcn} point 1 is a result of Paley and Wiener \cite{paleywiener} and actually holds also for Schauder bases for Banach spaces \cite[Ch.1-Th. 10]{young}. Lemma \ref{pertcn} points 2 and 3 are proved in \cite[Theorem 22.1.1]{Chris}. 

During the last decades other types of sequences of vectors have been defined and studied, like $\D$-quasi bases, originally introduced in \cite{bag2013}, and then studied and extended in \cite{bagbookPT,bagspringer,inoue1,inoue2,kamuda}. Let $\D$ be a dense subspace of $\H$. A pair of sequences $(\F_\xi,\F_{\widetilde \xi})$ is a $\D$-quasi basis if $\pin{\xi_n}{\widetilde\xi_m}=\delta_{n,m}$ and 
\be
\sum_{n=0}^\infty \pin{f}{\xi_n}\pin{\widetilde\xi_n}{g}=\pin{f}{g}, \qquad  \forall f,g\in \D.
\label{dquasi}\en
$\D$-quasi bases seem to be valid alternatives to ordinary bases in concrete physical applications. In particular, as (\ref{dquasi}) shows, they produce resolutions of the identity (or, as they are called with a more physical language, {\em closure relations}), for some physical systems described by certain Hamiltonian operators which are not self-adjoint and whose eigenstates are known not to produce basis in $\Hil$. This aspect is discussed in details in, e.g., \cite{bagspringer}. 

Inspired by Lemma \ref{pert2} and \ref{pertcn}, in this paper we will provide perturbation results for sequences (in particular, for $\D$-quasi bases) generating a reconstruction formula on a dense subspace $\D$ of $\H$, even when the original sequences do not generate themselves any reconstruction formula. Since we are dealing with sequences $\F_\xi$ for which a series $\sum_{n=0}^\infty c_n \xi_n$ may not be convergent, we will consider also convergence and reconstruction formulas in a weak sense. Throughout the paper we will discuss some examples also in connection to Physics, and to Quantum Mechanics in particular.

The paper is organized as follows:  in Section \ref{sec2} we give perturbation results regarding reconstruction formulas in strong sense. The weak versions of these results are discussed in Section \ref{sec3}, and extended further to a distributional settings in Section \ref{sec4}. Section \ref{sec6} contains our conclusions.   

\vspace*{2mm}

{\bf Remark.} 
	As we will show, quite often in our results a parameter $\lambda$ appears to extend the range of their applications. To make an example with Lemma \ref{pert2}, given a frame $\F_\xi$ with bounds $A,B$, a sequence  $\F_\varphi$ may not satisfy \eqref{11}, but it may still satisfy, for some $\lambda \neq 0$ and $B'<A$, the condition
	\begin{equation}
		\sum_{n=0}^\infty |\pin{f}{\xi_n-\lambda\varphi_n}|^2\leq B' \n{f}^2, \qquad \forall f\in \H.
	\end{equation}
	Then, the conclusion of that lemma still holds, i.e. $\F_\varphi$ is a frame for $\H$.

\section{Convergence in strong sense}
\label{sec2}

We begin our series of perturbation results considering first what can be deduced when working in strong sense.

\begin{theo}\label{theom41}
	Let $\D$ be a dense subspace of $\H$. Let $\F_\varphi$ and $\F_\psi$ be sequences of $\H$ such that $\displaystyle \sum_{n=0}^\infty \pin{\varphi_n}{f}\psi_n$ exists for every $f\in \D$ and let $0\leq\alpha<1$ and $\lambda \neq 0$ be such that 
	\begin{equation}
	\label{pertDbasis}
	\left \|\lambda\sum_{n=0}^\infty \pin{\varphi_n}{f}\psi_n- f \right\|\leq \alpha \|f\|, \qquad \forall f\in \D. 
	\end{equation}
	Then, a sequence $\F_{\widetilde{\varphi}}$ does exist such that
	\be
	f=\sum_{n=0}^\infty \pin{\varphi_n}{f}\widetilde\varphi_n, \qquad \forall f\in \D. 
	\label{13}\en
	If, in particular, $\D=\H$, a second sequence $\F_{\widetilde{\psi}}$ also exists such that
	\be
	f=\sum_{n=0}^\infty \pin{\varphi_n}{f}\widetilde\varphi_n=\sum_{n=0}^\infty \pin{\widetilde\psi_n}{f}\psi_n, \qquad \forall f\in \Hil,
	\label{14}\en
	satisfying   $\pin{\varphi_n}{\widetilde{\varphi}_m}=\pin{\widetilde\psi_n}{{\psi}_m}$  for every $n,m$. In particular, $\F_\varphi$ and $\F_{\widetilde \varphi}$ are bi-orthogonal if and only if $\F_{\psi}$ and $\F_{\widetilde \psi}$ are bi-orthogonal.
\end{theo}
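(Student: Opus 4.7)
The plan is to convert the perturbation hypothesis into a statement about a single bounded invertible operator on $\H$, and then read off $\widetilde\varphi_n$ (and, when $\D=\H$, also $\widetilde\psi_n$) as explicit images of $\psi_n$ and $\varphi_n$ under this operator, respectively its adjoint inverse. Concretely, I would define $S : \D \to \H$ by $Sf := \lambda \sum_{n=0}^\infty \pin{\varphi_n}{f}\psi_n$, which is linear and well-defined by assumption. The bound \eqref{pertDbasis} reads $\|Sf - f\| \leq \alpha \|f\|$ on $\D$, so $S$ is bounded on $\D$ and extends by density to $\overline S \in \B(\H)$ with $\|\overline S - I\| \leq \alpha < 1$. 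Applying Lemma \ref{lem_pert_op} with $T = I$ and perturbation $\overline S - I$ yields $\overline S^{-1} \in \B(\H)$.

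With this in hand, I set $\widetilde\varphi_n := \lambda \overline S^{-1}\psi_n$. For $f \in \D$, continuity of $\overline S^{-1}$ lets me move it past the convergent series:
\[
f = \overline S^{-1}(\overline S f) = \overline S^{-1}\!\left( \lambda \sum_{n=0}^\infty \pin{\varphi_n}{f}\psi_n \right) = \sum_{n=0}^\infty \pin{\varphi_n}{f}\widetilde\varphi_n,
\]
which is \eqref{13}. When $\D = \H$, the same identity holds on all of $\H$, providing the first equality in \eqref{14}. The dual reconstruction follows by passing to adjoints: $\overline S^\dagger \in \B(\H)$ is invertible with $(\overline S^\dagger)^{-1} = (\overline S^{-1})^\dagger$, and I set $\widetilde\psi_n := \overline\lambda (\overline S^{-1})^\dagger \varphi_n$. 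Pulling the scalar out gives $\pin{\widetilde\psi_n}{f} = \lambda \pin{\varphi_n}{\overline S^{-1} f}$, so
\[
\sum_{n=0}^\infty \pin{\widetilde\psi_n}{f}\psi_n = \lambda \sum_{n=0}^\infty \pin{\varphi_n}{\overline S^{-1} f}\psi_n = \overline S(\overline S^{-1}f) = f.
\]

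For the cross-identity, plugging the two definitions into both sides gives $\pin{\varphi_n}{\widetilde\varphi_m} = \lambda \pin{\varphi_n}{\overline S^{-1}\psi_m} = \pin{\widetilde\psi_n}{\psi_m}$, and the bi-orthogonality equivalence is then immediate. I do not anticipate any serious obstacle: the only points that need a little care are the interchange of $\overline S^{-1}$ with the infinite sum, which is automatic from its boundedness and the assumed convergence of the series, and the fact that the first conclusion cannot in general be extended from $\D$ to the whole of $\H$ without the extra hypothesis $\D = \H$, since $\sum_n \pin{\varphi_n}{f}\psi_n$ is only guaranteed to converge for $f \in \D$.
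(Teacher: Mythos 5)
Your proposal is correct and follows essentially the same route as the paper: the paper defines $Qf=\sum_n\pin{\varphi_n}{f}\psi_n$ (your $S=\lambda Q$), extends it by density, inverts it via Lemma \ref{lem_pert_op}, and sets $\widetilde\varphi_n=\overline{Q}^{-1}\psi_n$, $\widetilde\psi_n=(\overline{Q}^{-1})^\dagger\varphi_n$, which coincide with your $\lambda\overline{S}^{-1}\psi_n$ and $\overline{\lambda}(\overline{S}^{-1})^\dagger\varphi_n$. The only cosmetic difference is where the scalar $\lambda$ is absorbed.
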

\begin{proof}
	Let us define $Q:\D\to \H$, $\displaystyle Qf=\sum_{n=0}^\infty \pin{\varphi_n}{f}\psi_n$ for $f\in \D$, which is a bounded operator by \eqref{pertDbasis}. Since $\D$ is dense in $\H$, $Q$ has a bounded extension $\overline{Q}:\H\to \H$ which satisfies 
	\begin{equation*}
	\left \|\lambda\overline{Q}f- f \right\|\leq \alpha \|f\|, \qquad \forall f\in \H. 
	\end{equation*}
	Note that $\overline{Q}=Q$ if $\D=\H$. 	By Lemma \ref{lem_pert_op}, putting  $S=\lambda\overline{Q}- I$ and $T=I$, we obtain that $\overline{Q}$ is invertible with $\overline{Q}^{-1}\in \B(\H)$. As a consequence, $Q$ is injective and $\overline{Q}^{-1}Q=I_{\D}$ and $Q\overline{Q}^{-1}=I_{R(Q)}$. 
	If we define $\widetilde \varphi_n=\overline{Q}^{-1}\psi_n$ and $\widetilde \psi_n={\overline{Q}^{-1}}^\dagger\varphi_n$, $n\geq 0$, we have
	$$
	\sum_{n=0}^\infty \pin{\varphi_n}{f}\widetilde\varphi_n=\overline{Q}^{-1}\sum_{n=0}^\infty \pin{\varphi_n}{f}\psi_n=\overline{Q}^{-1}Qf=f, \qquad\forall f\in \D,
	$$
	$$
	\sum_{n=0}^\infty \pin{\widetilde \psi_n}{f}\psi_n=\sum_{n=0}^\infty \pin{\varphi_n}{\overline{Q}^{-1} f}\psi_n=Q\overline{Q}^{-1}f=f, \qquad\forall f\in R(Q).
	$$
	Here we have used, in particular, the boundedness of $\overline{Q}^{-1}$. When $\D=\H$, $R(Q)=\H$.  Finally, the last claim follows from the identities $\pin{\varphi_n}{\widetilde{\varphi}_m}=\pin{\varphi_n}{\overline{Q}^{-1}{\psi_m}}$, together with $\pin{\psi_n}{\widetilde{\psi}_m}=\pin{\overline{Q}^{-1}\psi_n}{{\varphi}_m}$.
	
\end{proof}

\begin{exm}
	\label{exl2R1} 
	Let $\H=\mathcal{L}^2(\R)$ and $\rho_j$, $j=1,2$, two Lebesgue-measurable functions. Let us introduce the functions
	\be
	\varphi_n(x)=\rho_1(x)\,c_n(x),\qquad \psi_n(x)=\rho_2(x)\,c_n(x),
	\label{f2.1}\en
	where $c=\{c_n\}$ is taken to be an o.n. basis of $\ltwo$. We write, as usually,  $\F_\varphi=\{\varphi_n\}$ and $\F_\psi=\{\psi_n\}$. The properties of these two sequences have been discussed, in connection with some physical systems, in \cite{bagspringer} and references therein, under the assumption that
	\be \label{f3}
	\rho_1(x)\overline{\rho_2(x)}=1
	\en
	a.e. in $\mathbb{R}$. In this example, we assume that $\rho_1,\rho_2$ have real values, $\rho_2\in \mathcal{L}^\infty(\R)$, $\|\rho_1 \rho_2\|_\infty<\infty $ and $ \rho_1(x)\rho_2(x)>c>0$ in $\R$. Furthermore, we choose $\rho_1$ so that $\varphi_n\in \Lc^2(\R)$ for every $n$. It is clear that  $\psi_n\in \Lc^2(\R)$ for every $n$, due to the boundedness of $\rho_2$. \\
	We define $\D=\{f\in \mathcal{L}^2(\R): \rho_1 f\in \mathcal{L}^2(\R)\}$. This subspace is dense in $\H$ independently of $\rho_1$ (see \cite[Example 3.8]{schm}). 
In other words, $\D$ is a {\em large} set, for any choice of $\rho_1$.
	
	For $f\in \D$,  $\sum_{n=0}^\infty \pin{\varphi_n}{f}\psi_n$ is convergent: in fact,  $\{\pin{\varphi_n}{f}\}\in l^2(\N)$ and  $\F_\psi=\{\psi_n\}$ is a Bessel sequence. 
	Moreover, because of (\ref{f2.1}),
	$$
	Q f= \sum_{n=0}^\infty \pin{\varphi_n}{f}\psi_n=\rho_1 \rho_2 f, \qquad \forall f\in \D.
	$$
	Therefore, if we set $0<\lambda < \frac{2}{\n{\rho_1\rho_2}_\infty}$
	we conclude that
	$$
	\left  \|\lambda \sum_{n=0}^\infty \pin{\varphi_n}{f}\psi_n- f \right \|=\left\|(\lambda\rho_1 \rho_2 -1)f\right\|\leq \alpha\|f\| \qquad \forall f\in \D,
	$$
	where $\alpha=\|\lambda\rho_1 \rho_2-1 \|_\infty$. By our hypothesis, $\alpha < 1$, then by Theorem \ref{theom41} a sequence $\F_{\widetilde{\varphi}}$ does exist such that \eqref{13} holds. Following the proof of the theorem, we can also give an explicit expression of $\F_{\widetilde{\varphi}}$. Indeed, clearly, $\overline{Q}:\mathcal{L}^2(\R) \to \mathcal{L}^2(\R)$ is the multiplication operator by $\rho_1 \rho_2$ on $\mathcal{L}^2(\R)$.  $\widetilde \varphi_n=\overline{Q}^{-1}\psi_n=\rho_1^{-1}c_n$ for any $n$. We want to remark that,  by our hypothesis, $\rho_1^{-1}\in \mathcal{L}^\infty(\R)$. \\
	If in addition also $\rho_1\in\mathcal{L}^\infty(\R)$ (which, together the assumptions above, implies that $\rho_1^{-1},\rho_2^{-1}\in\mathcal{L}^\infty(\R) $ so $\F_\varphi$ and $\F_\psi$ are Riesz bases), then \eqref{14} holds and, in particular, $\widetilde \varphi_n(x)=\rho_1^{-1}(x)c_n(x)$, $\widetilde \psi_n(x)=\rho_2^{-1}(x)c_n(x)$ for any $n$. It is also clear that
	$$
	\pin{\varphi_n}{\widetilde{\varphi}_m}=\pin{\widetilde\psi_n}{{\psi}_m}=\pin{c_n}{c_m}=\delta_{n,m},
	$$
	independently of the explicit choice of $\rho_j$.
	\end{exm}

{
\begin{exm}\label{exa1}
	Let $\F_\eta$	and $\F_{\chi}$ be sequences such that
	$$
	f=\sum_{n=0}^\infty \pin{f}{\eta_n}\chi_n, \qquad \forall f\in\Hil.
	$$
	Let $X,Y\in \B(\Hil)$ be two non necessarily invertible operators with $\|X\|+\|Y\|+\|X\|\|Y\|<1$. Then, if we perturb $\F_{\eta}$ and $\F_{\chi}$ as follows, $\F_\varphi=\{\varphi_n=(\1+X)\eta_n\}$ and $\F_\psi=\{\psi_n=(\1+Y)\chi_n\}$, these two sequences satisfy (\ref{pertDbasis}) for $\lambda=1$. This can be checked using the following equality
	$$
	\sum_{n=0}^\infty \pin{f}{\varphi_n}\psi_n-f={X^\dagger f}+{Yf}+{YX^\dagger f},
	$$
	together with the Schwarz inequality and our condition on $\|X\|$ and $\|Y\|$. A particular choice of $X$ and $Y$, manifestly non invertible, can be produced by any fixed vector $\sigma\in\Hil$, normalized: $\|\sigma\|=1$. We define the operators $P_\sigma f=\pin{\sigma}{f}\sigma$, and $R_\sigma=\1-P_\sigma$. It is well known that they are both orthogonal projectors and, as such, in particular $\|P_\sigma\|=\|R_\sigma\|=1$. Also, $P_\sigma R_\sigma=0$. Then, if we take for instance $X=\alpha\,P_\sigma$ and $Y=\beta\,R_\sigma$, $\alpha,\beta>0$ with $\alpha+\beta+\alpha\beta<1$, we can perturb the original sequences $\F_{\eta}$ and $\F_{\chi}$ as proposed above. In particular, in this example we can identify all the vectors and the operators. Indeed we find
	$$
	Q=\1(1+\beta)+(\alpha-\beta)P_\sigma, \qquad Q^{-1}=\frac{1}{1+\beta}\left(\1+\frac{\beta-\alpha}{1+\alpha}\,P_\sigma\right).
	$$
	We observe that $Q=Q^\dagger$ and $Q^{-1}=(Q^{-1})^\dagger$. As for the vectors, we find
	$$
	\varphi_n=(\1+\alpha P_\sigma)\eta_n=\eta_n+\alpha P_\sigma\eta_n, \qquad \psi_n=(\1+\beta R_\sigma)\chi_n=\chi_n+\beta R_\sigma\chi_n,
	$$
	together with
	$$
	\widetilde{\varphi}_n=Q^{-1}\psi_n=(\1+\alpha P_\sigma)^{-1}\chi_n=\left(\1-\frac{\alpha}{1+\alpha}\,P_\sigma\right)\chi_n
	$$ 
	and
	$$
	\widetilde{\psi}_n=(Q^{-1})^\dagger\varphi_n=(\1+\beta R_\sigma)^{-1}\eta_n=\left(\1-\frac{\beta}{1+\beta}\,R_\sigma\right)\eta_n.
	$$
\end{exm}}

\begin{rems}
	\label{rem2.8}
	\begin{enumerate}
		\item Note that under the assumptions of Theorem \ref{theom41} and in the case $\D=\H$ we do not always have that 
		\begin{equation}
		\label{rem3.3}
		f=\sum_{n=0}^\infty\pin{\widetilde\varphi_n}{f}\varphi_n \quad\text{ or } \quad  f=\sum_{n=0}^\infty\pin{\psi_n}{f}\widetilde\psi_n, \quad \forall f\in \H,
		\end{equation} 
		because these series may not be convergent\footnote{In any case, \eqref{rem3.3} are always true in weak sense by \eqref{14}, meaning that $
			\pin{f}{g}=\sum_{n=0}^\infty\pin{\widetilde\varphi_n}{f}\pin{\varphi_n}{g}=\sum_{n=0}^\infty\pin{\psi_n}{f}\pin{\widetilde\psi_n}{g}$,  $ \forall f,g\in \H$.}.  Indeed, let $\{e_n\}$ be an o.n. basis of $\H$ and let  $$\F_\varphi=\{e_1,e_1,e_1,e_2,e_2,e_2,e_3,e_3,e_3,\dots\} \text{ and } \F_{\widetilde\varphi}=\{e_1,e_1,-e_1,e_2,e_1,-e_1,e_3,e_1,-e_1,\dots\},$$
		see \cite[Example 2.2]{balsto}. It is easy to see that $\displaystyle f=\sum_{n=0}^\infty \pin{\varphi_n}{f}\widetilde\varphi_n$ for any $f\in \H$, but $\displaystyle f=\sum_{n=0}^\infty \pin{\widetilde\varphi_n}{f}\varphi_n$ only if $\pin{e_1}{f}=0$.

		{
		If in Theorem \ref{theom41} we add that the series $\displaystyle \sum_{n=0}^\infty\pin{\widetilde\varphi_n}{f}\varphi_n$ is convergent for all $f\in \H$, or $\displaystyle \sum_{n=0}^\infty \pin{\psi_n}{f}\varphi_n$  is convergent for all $f\in \H$,  then we can conclude that $\displaystyle f=\sum_{n=0}^\infty\pin{\widetilde\varphi_n}{f}\varphi_n$ for all $f\in \H$ (a similar statement holds with $\psi_n,\widetilde \psi_n$ instead of $\widetilde\varphi_n, \varphi_n$). } For example, these are the cases when  $\F_\varphi$ and $\F_\psi$ are Bessel sequences.
		\item In Theorem \ref{theom41}, if $\pin{\varphi_n}{\psi_m}=\delta_{n,m}$, then $\F_{\widetilde \varphi}= \F_\psi$ and $\F_{\widetilde \psi}= \F_\varphi$. Moreover, 
		$\F_{\widetilde\varphi}$ (resp., $\F_{\widetilde \psi}$) is linearly independent (a Bessel sequence, a frame) if and only if  $\F_\psi$ is so (resp., $\F_\varphi$). 
		\item The excess of a sequence $\F_\varphi$ is  $$e(\varphi)=\sup\{|I|:I\subseteq \N \text{ and } \overline{{span}}\{\varphi_n\}_{n\in \N\backslash I}=\overline{{span}}\{\varphi_n\}_{n\in \N}\}.$$
		Here with $\overline{{span}}\{\varphi_n'\}$ we denote the closed linear span of a sequence $\{\varphi_n'\}$. 
		The excess is a measure of overcompleteness of a sequence. For instance, a frame $\F_\varphi$ is a Riesz basis if and only if $e(\varphi)=0$. Let $\F_\varphi,\F_{\widetilde \varphi}, \F_\psi$ and $\F_{\widetilde \psi}$ be as in Theorem \ref{theom41}. Then $e(\widetilde \varphi)=e(\psi)$ and $e(\widetilde \psi)=e(\varphi)$.  If $\F_\varphi$ and $\F_\psi$ are frames, then $e(\widetilde \varphi)=e(\psi)=e(\widetilde \psi)=e(\varphi)$ by \cite[Theorem 2.2]{beric}.
	\end{enumerate}
\end{rems}

\begin{cor}
	\label{cor_str}
	Let $\D$ be a dense subspace of $\H$. Let $\F_\xi$ and $\F_\chi$ be two sequences of vectors in $\Hil$ such that
	$$
	f=\sum_{n=0}^\infty \pin{\xi_n}{f}\chi_n, \qquad \forall f \in \D.
	$$
	Moreover, let $\F_\varphi$ and $\F_\psi$ be two sequences of $\H$ such that, for some $0\leq\beta, \gamma$ and $\lambda\neq 0$, with $\beta+\gamma<1$, 
	$$
	\left \|\sum_{n=0}^{\infty} \pin{\lambda\varphi_n- \xi_n}{f}\psi_n \right\|\leq \beta \n{f}, \qquad \forall f\in \D, 
	$$
	$$
	\left\|\sum_{n=0}^{\infty} \pin{\xi_n}{f}(\psi_n-\chi_n)\right \|\leq \gamma \n{f}, \qquad \forall f\in \D.  
	$$
	Then a sequence $\F_{\widetilde \varphi}$ does exist such that \eqref{13} holds. If, in particular, $\D=\H$, then two sequences $\F_{\widetilde \varphi} $ and $\F_{\widetilde \psi}$ do exist such that \eqref{14} holds.
\end{cor}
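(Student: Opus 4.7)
The plan is to reduce this corollary directly to Theorem \ref{theom41} applied to the pair $(\F_\varphi,\F_\psi)$. The whole task is therefore to verify the two hypotheses of that theorem: namely, that $\sum_{n=0}^\infty \pin{\varphi_n}{f}\psi_n$ converges for every $f\in\D$, and that inequality \eqref{pertDbasis} holds for some $0\le\alpha<1$.

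First I would establish convergence of the candidate series. The trick is the splitting
\[
\sum_{n=0}^\infty \pin{\lambda\varphi_n}{f}\psi_n=\sum_{n=0}^\infty\pin{\lambda\varphi_n-\xi_n}{f}\psi_n+\sum_{n=0}^\infty\pin{\xi_n}{f}(\psi_n-\chi_n)+\sum_{n=0}^\infty\pin{\xi_n}{f}\chi_n.
\]
The first two series converge in $\H$ because the two assumed norm inequalities make sense only when the partial sums converge, and the third converges to $f$ by the reconstruction hypothesis on $(\F_\xi,\F_\chi)$. Dividing by $\lambda\ne 0$ yields the convergence of $\sum_n\pin{\varphi_n}{f}\psi_n$ for every $f\in\D$.

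The same identity also delivers the desired perturbation estimate. Rearranging it as
\[
\lambda\sum_{n=0}^\infty\pin{\varphi_n}{f}\psi_n-f=\sum_{n=0}^\infty\pin{\lambda\varphi_n-\xi_n}{f}\psi_n+\sum_{n=0}^\infty\pin{\xi_n}{f}(\psi_n-\chi_n),
\]
an application of the triangle inequality together with the two hypotheses of the corollary gives
\[
\left\|\lambda\sum_{n=0}^\infty\pin{\varphi_n}{f}\psi_n-f\right\|\le(\beta+\gamma)\|f\|,\qquad \forall f\in\D.
\]
Setting $\alpha=\beta+\gamma$, which lies in $[0,1)$ by assumption, condition \eqref{pertDbasis} is verified with the same $\lambda$.

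At this point Theorem \ref{theom41} immediately produces the sequence $\F_{\widetilde\varphi}$ satisfying \eqref{13}, and when $\D=\H$ the same theorem also yields $\F_{\widetilde\psi}$ with \eqref{14}. There is no real obstacle here: the only point worth double-checking is the bookkeeping in the telescoping identity and the fact that the bound $(\beta+\gamma)<1$ is exactly what is needed to match Theorem \ref{theom41}; everything else is a direct application.
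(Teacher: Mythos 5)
Your proposal is correct and follows essentially the same route as the paper: the authors' proof consists precisely of the telescoping identity $\sum_{n}\pin{\lambda\varphi_n}{f}\psi_n-f=\sum_{n}\pin{\lambda\varphi_n-\xi_n}{f}\psi_n+\sum_{n}\pin{\xi_n}{f}(\psi_n-\chi_n)$, the triangle inequality giving $\alpha=\beta+\gamma<1$, and an appeal to Theorem \ref{theom41}. Your additional remark on the convergence of the series is a reasonable reading of the hypotheses and only makes explicit what the paper leaves implicit.
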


\begin{proof}
	The proof is based on the identity
	$$
	\sum_{n=0}^{\infty} \pin{\lambda\varphi_n}{f}\psi_n-f=\sum_{n=0}^{\infty} \pin{\lambda\varphi_n-\xi_n}{f}\psi_n+\sum_{n=0}^{\infty} \pin{\xi_n}{f}(\psi_n-\chi_n)
	$$
	together with the inequalities above and Theorem \ref{theom41}.
	\end{proof}

\begin{exm}
	Let $\F_\xi$ be a lower semi-frame of $\H$. This means that $\F_\xi$ is a sequence satisfying
	$$
	A\n{f}^2 \leq \sum_{n=0}^{\infty} |\pin{\xi_n}{f}|^2, \qquad \forall f\in \H, 
	$$ 
	for some $A>0$. 
	Then by \cite[Proposition 3.4]{casazza} (see also \cite[Remark 1]{corso_seq} and \cite[Theorem 4.1]{corso_seq2}) there exists a Bessel sequence $\F_\chi$ such that 
	$$
	f=\sum_{n=0}^\infty \pin{\xi_n}{f}\chi_n, \qquad \forall f \in \D,
	$$
	where $ \D=\left \{f\in \H:\sum_{n=0}^{\infty} |\pin{\xi_n}{f}|^2 <\infty \right \}$ and a Bessel bound of $\F_\chi$ is $\frac 1 A$. 
	In the following, we suppose that $\D$ is dense in $\H$. Let $\F_\varphi$ be a sequence of $\H$ such that $\F_{\varphi-\xi}$ is a Bessel sequence with a Bessel bound $B<A$. Then 
	$$
	\left \|\sum_{n=0}^{\infty} \pin{\varphi_n-\xi_n}{f}\chi_n \right\|\leq \sqrt{\frac B A} \n{f}, \qquad \forall f\in \D. 
	$$
	By Corollary \ref{cor_str} (setting $\F_\psi=\F_\chi$, $\beta=\sqrt{\frac B A}$, $\gamma=0$ and $\lambda=1$), there exists a sequence $\F_{\widetilde \varphi}$ such that
	$$
	f=\sum_{n=0}^\infty \pin{\varphi_n}{f}\widetilde\varphi_n, \qquad \forall f\in \D. 
	$$
	We stress that \eqref{13} does not necessarily extend to every $f\in \H$, despite the fact that the formula defines an identity. Examples, involving in particular lower semi-frames, describing this case can be found in \cite{casazza,corso_seq2}. 
\end{exm}

\begin{cor}
	\label{cor2.8}
	Let $\D$ be a dense subspace of $\H$. Let $\F_\xi$ and $\F_{\chi}$ be two sequences of vectors in $\Hil$ such that
	$$
	f=\sum_{n=0}^\infty \pin{\xi_n}{f}\chi_n, \qquad \forall f \in \D.
	$$
	Moreover, let $\F_\varphi$ and $\F_{\psi}$ be two sequences of $\H$ and $\lambda\neq 0$ and suppose that one of the following condition is satisfied 
	\begin{enumerate}
		\item $\F_{\lambda\varphi-\xi}$, $\F_{\psi}$, $\F_{\xi}$ and $\F_{\psi-\chi}$ are Bessel sequences with Bessel bounds $B_{\lambda\varphi-\xi}$, $B_{\psi}$, $B_{\xi}$ and $B_{\psi-\chi}$, respectively, verifying $B_{\lambda\varphi-\xi}B_{\psi}+B_{\xi}B_{\psi-\chi}<1$.	
		\item $\F_{\lambda\varphi-\xi}$ and $\F_{\chi}$ are Bessel sequences with Bessel bounds $B_{\lambda\varphi-\xi}$ and $B_{\chi}$, respectively, verifying $B_{\lambda\varphi-\xi}B_{\chi}<1$.	
		\item 
		\begin{equation}
		\label{cond_cor3.8}
		\sum_{n=0}^{\infty} (\n{\lambda\varphi_n-\xi_n}\n{\psi_n} +\n{\xi_n}\n{\psi_n-\chi_n}) < 1. 
		\end{equation}
	\end{enumerate}
	Then a sequence $\F_{\widetilde \varphi}$ does exist such that \eqref{13} holds. If, in particular, $\D=\H$, then two sequences $\F_{\widetilde \varphi}$ and $\F_{\widetilde \psi}$ do exist such that \eqref{14} holds.
\end{cor}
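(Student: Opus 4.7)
My plan is to reduce each of the three conditions to a direct application of Corollary \ref{cor_str} (or, in case (3), of Theorem \ref{theom41} itself), by producing constants $\beta,\gamma$ (or $\alpha$) controlling the relevant synthesis-type sums. The algebraic identity underlying all three cases is
$$
\lambda\sum_{n=0}^{\infty}\langle\varphi_n,f\rangle\psi_n-f=\sum_{n=0}^{\infty}\langle\lambda\varphi_n-\xi_n,f\rangle\psi_n+\sum_{n=0}^{\infty}\langle\xi_n,f\rangle(\psi_n-\chi_n),
$$
valid on $\D$ thanks to the base reconstruction $f=\sum_n\langle\xi_n,f\rangle\chi_n$, exactly as in the proof of Corollary \ref{cor_str}.

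For case (1), I would invoke the standard observation that the synthesis operator of a Bessel sequence with bound $B$ has operator norm at most $\sqrt{B}$. Applied to each of the two sums above, this yields
$$
\Bigl\|\sum_n\langle\lambda\varphi_n-\xi_n,f\rangle\psi_n\Bigr\|\le\sqrt{B_{\lambda\varphi-\xi}B_\psi}\,\|f\|,\qquad \Bigl\|\sum_n\langle\xi_n,f\rangle(\psi_n-\chi_n)\Bigr\|\le\sqrt{B_\xi B_{\psi-\chi}}\,\|f\|,
$$
so that the hypotheses of Corollary \ref{cor_str} hold with $\beta=\sqrt{B_{\lambda\varphi-\xi}B_\psi}$ and $\gamma=\sqrt{B_\xi B_{\psi-\chi}}$, and $\beta+\gamma<1$ follows from the given numerical condition. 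Case (2) is then the specialization in which Corollary \ref{cor_str} is applied with the second sequence replaced by $\F_\chi$ (regardless of the given $\F_\psi$): the second sum vanishes, so $\gamma=0$, and only $\beta=\sqrt{B_{\lambda\varphi-\xi}B_\chi}<1$ has to be checked, which is exactly the given condition. Case (3) is handled by the summand-wise Cauchy--Schwarz estimate $|\langle v,f\rangle|\,\|w\|\le\|v\|\|w\|\|f\|$; combined with the identity above and with (\ref{cond_cor3.8}) it yields $\|\lambda\sum_n\langle\varphi_n,f\rangle\psi_n-f\|<\|f\|$ directly, so Theorem \ref{theom41} applies with $\alpha=\sum_n(\|\lambda\varphi_n-\xi_n\|\|\psi_n\|+\|\xi_n\|\|\psi_n-\chi_n\|)<1$.

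The one technical point to be careful about is the convergence of $\sum_n\langle\varphi_n,f\rangle\psi_n$ on $\D$, which is a standing hypothesis of Theorem \ref{theom41} (and hence of Corollary \ref{cor_str}). In cases (1) and (2) this is handled by writing $\varphi_n=\lambda^{-1}(\xi_n+(\lambda\varphi_n-\xi_n))$ and splitting the series into two pieces: one equals $\lambda^{-1}\sum_n\langle\xi_n,f\rangle\psi_n$ (convergent by Bessel-type summability of $\{\langle\xi_n,f\rangle\}$ paired against $\F_\psi$ or, in case (2) after replacing $\psi$ by $\chi$, reducing directly to the base reconstruction $f=\sum_n\langle\xi_n,f\rangle\chi_n$), and the other is $\lambda^{-1}\sum_n\langle\lambda\varphi_n-\xi_n,f\rangle\psi_n$ (convergent by the Bessel bounds of $\F_{\lambda\varphi-\xi}$ and $\F_\psi$, resp.\ $\F_\chi$). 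In case (3) the same splitting yields absolute convergence, each subseries being dominated termwise by $\|f\|$ times the summable expressions appearing in (\ref{cond_cor3.8}).
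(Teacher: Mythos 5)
Your reduction to Corollary \ref{cor_str} via the identity
$$
\lambda\sum_{n=0}^{\infty}\pin{\varphi_n}{f}\psi_n-f=\sum_{n=0}^{\infty}\pin{\lambda\varphi_n-\xi_n}{f}\psi_n+\sum_{n=0}^{\infty}\pin{\xi_n}{f}(\psi_n-\chi_n)
$$
is clearly the intended route (the paper states this corollary without proof, immediately after Corollary \ref{cor_str}), and your treatment of cases (2) and (3), including the verification that $\sum_n\pin{\varphi_n}{f}\psi_n$ converges on $\D$, is essentially correct. The genuine problem is in case (1). The analysis/synthesis estimate gives $\beta=\sqrt{B_{\lambda\varphi-\xi}B_\psi}$ and $\gamma=\sqrt{B_\xi B_{\psi-\chi}}$, but the hypothesis $B_{\lambda\varphi-\xi}B_\psi+B_\xi B_{\psi-\chi}<1$ does \emph{not} imply $\beta+\gamma<1$: if both products equal $0.4$ their sum is $0.8$, while the sum of their square roots exceeds $1.26$. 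So your sentence ``$\beta+\gamma<1$ follows from the given numerical condition'' is false; the argument only delivers the conclusion under the stronger hypothesis $\sqrt{B_{\lambda\varphi-\xi}B_\psi}+\sqrt{B_\xi B_{\psi-\chi}}<1$ (for which $B_{\lambda\varphi-\xi}B_\psi+B_\xi B_{\psi-\chi}<1/2$ suffices, by Cauchy--Schwarz).

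This is not a gap you could have closed by a cleverer estimate: condition (1) as literally stated is insufficient. Take $\xi_n=\chi_n=e_n$ an o.n.\ basis (so $\D=\H$), $\lambda=1$, $\varphi_n=e_n$ for $n\neq k$, $\varphi_k=0$, and $\psi_n=0.9\,e_n$. Then the optimal Bessel bounds are $B_{\lambda\varphi-\xi}=1$, $B_\psi=0.81$, $B_\xi=1$, $B_{\psi-\chi}=0.01$, so $B_{\lambda\varphi-\xi}B_\psi+B_\xi B_{\psi-\chi}=0.82<1$; yet $\pin{\varphi_n}{e_k}=0$ for every $n$, so no sequence $\F_{\widetilde\varphi}$ can satisfy \eqref{13}. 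A correct write-up should therefore derive the square-root condition explicitly and point out the mismatch with the stated hypothesis, rather than assert the implication. One smaller imprecision: in case (3) the subseries $\lambda^{-1}\sum_n\pin{\xi_n}{f}\psi_n$ is \emph{not} termwise dominated by the expressions in \eqref{cond_cor3.8}; you need the further splitting $\psi_n=\chi_n+(\psi_n-\chi_n)$, where $\sum_n\pin{\xi_n}{f}\chi_n=f$ converges by hypothesis but not necessarily absolutely, and only the remainder is controlled by \eqref{cond_cor3.8}.
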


The previous corollary in the case $\lambda=1$ and $\D=\H$ is also connected to Theorem 5.3 of \cite{cfl} formulated for Hilbert spaces. Now, we show an application of the corollary.

{

\begin{exm}
	Let $\mathcal F_e$ be an orthonormal basis of $\H$. Let us define a new sequence $\mathcal{F}_\chi$ as follows $\chi_n=e_n+e_{n+1}$, $n\geq 1$ and let $\D$ be its linear span. As mentioned in \cite[Example 5.4.6]{Chris}, $\mathcal{F}_\chi$ is a Bessel sequence with Bessel bound $4$, but not a frame. However, it is has a bi-orthogonal sequence $\mathcal{F}_\xi$, where in particular $\displaystyle \xi_n=\sum_{k=1}^n (-1)^{n+k} e_k$. This implies that 
	$$
	f=\sum_{n=1}^\infty \pin{\xi_n}{f} \chi_n, \qquad \forall f\in \D. 
	$$ 
	Now let $\mathcal{F}_\varphi$ be the sequence given by $$\varphi_n=\sum_{k=1}^n (-1)^{n+k} \gamma_{n,k} e_k, \qquad n\geq 1,$$ where $\{\gamma_{n,k}\}_{n\geq k\geq 1}$ is some complex sequence such that, for some $\lambda \neq 0$, $\displaystyle \sum_{n\geq k} |\lambda\gamma_{n,k}-1|^2$ is convergent for every $k\geq 1$ and $\displaystyle \sup_{k\geq 1} \Big (\sum_{n\geq k} |\lambda\gamma_{n,k}-1|^2 \Big )<\infty $. Then $\F_{\varphi-\xi}$ is a Bessel sequence with Bessel bound $\displaystyle \sup_{k\geq 1} \Big (\sum_{n\geq k} |\lambda\gamma_{n,k}-1|^2 \Big )$. Indeed $$ \sum_{n=1}^\infty  |\pin{f}{\varphi_n-\xi_n}|^2=\sum_{k=1}^\infty \Big (\sum_{n\geq k}|\lambda\gamma_{n,k}-1|^2  \Big ) |\pin{f}{e_k}|^2\leq \sup_{k\geq 1} \Big (\sum_{n\geq k} |\lambda\gamma_{n,k}-1|^2 \Big ) \|f\|^2$$
	for every $f \in \H$. If, in particular, 
	\begin{equation}
		\label{cond1/4}
		\sup_{k\geq 1}\Big (\sum_{n\geq k} |\lambda\gamma_{n,k}-1|^2 \Big )< \frac 14
	\end{equation} by Corollary \ref{cor2.8}, point 2,  condition \eqref{13} holds for some sequence $\F_{\widetilde \varphi}$. We observe that the inequality \eqref{cond1/4} holds, for instance, taking $\gamma_{n,k}=\frac{1}{\lambda}(\frac{\epsilon}{n-k+1}+1)$ for any $\lambda\neq 0$ and $0\leq \epsilon<\frac{3}{2\pi^2}$.  
\end{exm}
}

Theorem \ref{theom41} above gives also a generalization of Lemma \ref{pertcn}.

\begin{cor}
	Let $\F_\varphi$ and $\F_{\widetilde \varphi}$ be sequences such that $\displaystyle f=\sum_{n=0}^\infty \pin{\varphi_n}{f}\widetilde \varphi_n$ for every $f\in \H$. Let $\F_{\psi}$ be a sequence, $0\leq\alpha<1$ and $\lambda\neq 0$ such that 
	\begin{equation}
	\label{pertDc_n}
	\left \|\sum_{n=0}^\infty c_n(\lambda\psi_n- \widetilde \varphi_n) \right\|\leq \alpha \left \| \sum_{n=0}^\infty c_n\widetilde\varphi_n \right \|,  
	\end{equation}
	for every finite complex sequence $\{c_n\}$. Then a sequence $\F_{\widetilde \psi}$ does exist such that $$
	f=\sum_{n=0}^\infty \pin{\widetilde\psi_n}{f}\psi_n, \qquad \forall f\in \Hil.
	$$
\end{cor}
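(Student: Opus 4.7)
The plan is to reduce the statement to Theorem \ref{theom41} applied with $\D=\H$, using $\F_\varphi$ and $\F_\psi$ in the roles of the two sequences there. It then suffices to verify that
$$
\left\|\lambda\sum_{n=0}^{\infty}\pin{\varphi_n}{f}\psi_n - f\right\|\leq \alpha\|f\|, \qquad \forall f\in\H,
$$
with the series being convergent; the sequence $\F_{\widetilde\psi}$ produced by Theorem \ref{theom41} will then satisfy the required reconstruction formula on $\H$.

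Fix $f\in\H$, set $c_n:=\pin{\varphi_n}{f}$, and define the partial sums
$$
S_N=\sum_{n=0}^{N}c_n\widetilde\varphi_n, \qquad T_N=\sum_{n=0}^{N}c_n\psi_n.
$$
By hypothesis $S_N\to f$, hence $\{S_N\}$ is Cauchy. Apply \eqref{pertDc_n} to the finite sequence taking value $c_n$ for $n\leq N$ and $0$ otherwise to get $\|\lambda T_N - S_N\|\leq \alpha\|S_N\|$. The main technical step is showing that $\{T_N\}$ itself converges in $\H$: for $M<N$, apply \eqref{pertDc_n} to the finite sequence equal to $c_n$ for $M<n\leq N$ and $0$ elsewhere, obtaining
$$
\|\lambda(T_N-T_M)-(S_N-S_M)\|\leq \alpha\|S_N-S_M\|.
$$
By the triangle inequality $|\lambda|\,\|T_N-T_M\|\leq (1+\alpha)\|S_N-S_M\|$, so $\{T_N\}$ is Cauchy and $Qf:=\sum_{n=0}^{\infty}\pin{\varphi_n}{f}\psi_n$ exists in $\H$. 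Passing to the limit in $\|\lambda T_N-S_N\|\leq \alpha\|S_N\|$ delivers $\|\lambda Qf-f\|\leq \alpha\|f\|$.

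At this point the hypotheses of Theorem \ref{theom41} hold with $\D=\H$, so the theorem yields sequences (possibly different from the given $\F_{\widetilde\varphi}$) such that \eqref{14} holds; in particular there exists $\F_{\widetilde\psi}$ with $f=\sum_{n=0}^{\infty}\pin{\widetilde\psi_n}{f}\psi_n$ for all $f\in\H$, which is exactly the desired conclusion. I do not expect any serious obstacle: the only non-routine point is the Cauchy argument needed to upgrade the finite-sequence estimate \eqref{pertDc_n} into convergence of $\sum_n\pin{\varphi_n}{f}\psi_n$, and this follows directly from the Cauchyness of $\{S_N\}$ via the triangle inequality as above.
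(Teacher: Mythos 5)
Your proof is correct and follows the same route as the paper: substitute $c_n=\pin{\varphi_n}{f}$ into \eqref{pertDc_n} to obtain \eqref{pertDbasis} with $\D=\H$ and invoke Theorem \ref{theom41}. The only difference is that you spell out explicitly the Cauchy argument behind the paper's brief ``by a limit argument'' remark, which is a welcome addition since the convergence of $\sum_n\pin{\varphi_n}{f}\psi_n$ is genuinely needed to apply the theorem.
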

\begin{proof}
	Firstly, by a limit argument, we note that \eqref{pertDc_n} is valid also for those complex sequences $\{c_n\}$ such that $\displaystyle \sum_{n=0}^\infty c_n\widetilde\varphi_n$ is convergent. Then, if we fix $f\in \H$ and choose $c_n=\pin{\varphi_n}{f}$ for any $n$, we obtain condition
	\eqref{pertDbasis} with $\D=\H$. Hence, the conclusion follows by Theorem \ref{theom41}.  
\end{proof}

\subsection{An example from Quantum Mechanics}

This section is based on results previously deduced in \cite{bagjmaa2022} and in \cite{bagspringer}, to which we refer for more details. Here we will only state few essential facts, useful for us here.

Let us first introduce the following functions:
\be
\eta_n(x)=\frac{1}{\sqrt{2^n\,n!\sqrt{\pi}}}H_n(x)e^{-s_A(x)}=e_n(x)e^{-\sqrt{2}\alpha x}, \quad \chi_n(x)=\frac{1}{\sqrt{2^n\,n!\sqrt{\pi}}}H_n(x)e^{-s_B(x)}=e_n(x)e^{\sqrt{2}\alpha x}.
\label{ex1}\en
Here $H_n$ is the $n$-th Hermite polynomial, $s_A(x)=\frac{x^2}{2}+\sqrt{2}\alpha x$, $s_B(x)=\frac{x^2}{2}-\sqrt{2}\alpha x$, $\alpha$ is a real parameter, and $n=0,1,2,3,\ldots$. Moreover,
$$
e_n(x)=\frac{1}{\sqrt{2^n\,n!\sqrt{\pi}}}H_n(x)e^{-\frac{x^2}{2}}
$$
is the $n$-th eigestate of the Hamiltonian of the quantum mechanical harmonic oscillator, $H_0=\frac{1}{2}\left(\hat p^2+\hat q^2\right)$, where $\hat q$ and $\hat p$ are the (self-adjoint) position and momentum operators, satisfying (in a suitable sense, for instance, on the dense set of Schwartz functions $\Sc(\mathbb{R})$) the following canonical commutation relation: $[\hat q,\hat p]=i\1$. We refer to \cite{mess}, or to any other monograph on Quantum Mechanics, for many details on the harmonic oscillator. 

First of all we observe that $\F_\eta=\{\eta_n, \,n\geq0\}$ and $\F_\chi=\{\chi_n, \,n\geq0\}$ are not Riesz bases for $\ltwo$, since they are obtained by multiplying the o.n. basis $\F_e=\{e_n, \,n\geq0\}$, with an unbounded operator, with unbounded inverse. However, it is easy to check that $(\F_\eta,\F_\chi)$ are $D(\mathbb{R})$-quasi bases. Here $D(\mathbb{R})$ is the set of $C^\infty$-functions with compact support, which is known to be dense in $\ldue$. Let us call $H=H_0-\sqrt{2}\,\alpha\,\frac{d}{dx}$. Then $H^\dagger=H_0+\sqrt{2}\,\alpha\,\frac{d}{dx}$, and we have
$$
H\eta_n=E_n\eta_n, \qquad H^\dagger \chi_n=E_n\chi_n, \qquad E_n=n+\frac{1}{2}+\alpha^2=E_n^0+\alpha^2,
$$
where $E_n^0=n+\frac{1}{2}$ is the $n$-th eigenvalue of $H_0$, with eigenvector $e_n$. The operators $H$ and $H^\dagger$ are a particular case of the Hamiltonians of a {\em shifted harmonic oscillator}, \cite{bagspringer}, with a complex shift. This kind of operators are nowadays widely analysed in the context of PT-quantum mechanics and its relatives, see \cite{bagspringer,bagbookPT} and references therein. 

In the rest of this section we will show how to use the strategy proposed  in details in Example \ref{exa1}, in a slightly modified form, to deform $\F_\eta$ and $\F_\chi$, and to deduce some quantum-mechanical motivated operators connected to these deformations. In particular, our choice of functions in (\ref{ex1}) corresponds to the choice $\beta=-\alpha$  in Example \ref{exa1}. Hence we have 
$$
Q=\1(1-\alpha)+2\alpha P_\sigma, \qquad Q^{-1}=\frac{1}{1-\alpha}\left(\1-\frac{2\alpha}{1+\alpha}\,P_\sigma\right),
$$
and
$$
\varphi_n=(\1+\alpha P_\sigma)\eta_n, \qquad \psi_n=(\1(1-\alpha)+\alpha P_\sigma)\chi_n,
$$
together with
$$
\widetilde{\varphi}_n=\left(\1-\frac{\alpha}{1+\alpha}\,P_\sigma\right)\chi_n, \qquad 
\widetilde{\psi}_n=\frac{1}{1-\alpha}\left(\1-\alpha P_\sigma\right)\eta_n.
$$
To refine further these formulas, we fix $\sigma(x)$ as follows: $\sigma(x)=e_0(x)=\frac{1}{\pi^{1/4}}e^{-x^2/2}$.  Then we can rewrite the action of $P_\sigma$ on $\eta_n$ and $\chi_n$ as follows:
$$
P_\sigma\eta_n= \pi_n(\alpha)e_0, \qquad P_\sigma\chi_n= \pi_n(-\alpha)e_0,
$$
where
$$
\pi_n(\alpha)=\frac{1}{\sqrt{2^n\,n!\sqrt{\pi}}}\int_{\mathbb{R}}H_n(x)e^{-x^2-\sqrt{2}\alpha x}\,dx=\frac{(-\alpha)^n}{\sqrt{n!}}e^{\alpha^2/2},
$$
$n=0,1,2,3,\ldots$. Then the functions above become, for instance,
$$
\varphi_n(x)=\eta_n(x)-\frac{(-\alpha)^{n+1}}{\sqrt{n!}}e^{\alpha^2/2}\,e_0(x), \qquad \widetilde\varphi_n(x)=\chi_n(x)-\frac{(-\alpha)^{n+1}}{(1+\alpha)\sqrt{n!}}e^{\alpha^2/2}\,e_0(x),
$$
and so on. All these functions are simply the original ones, but shifted by some other function which is proportional to $e_0(x)=\sigma(x)$.  We see that these proportionality constants depend, in particular, on $n$. What is also interesting for possible applications to physics is that all these deformed functions are eigenstates, all with the same eigenvalues $E_n$, of different operators which one can easily find by suitable deformations of $H$ and $H^\dagger$. In particular we can introduce the following (for the moment, formal) operators:
$$
H_1=(\1+\alpha P_\sigma)H(\1+\alpha P_\sigma)^{-1}, \qquad H_2=(\1-\alpha R_\sigma)H^\dagger(\1-\alpha R_\sigma)^{-1},
$$
as well as
$$
H_3=\left(\1-\frac{\alpha}{1+\alpha}\,P_\sigma\right)H^\dagger \left(\1-\frac{\alpha}{1+\alpha}\,P_\sigma\right)^{-1}, \qquad H_4=(\1-\alpha P_\sigma)H(\1-\alpha P_\sigma)^{-1}.
$$ 
We see that $H_4$ coincides with $H_1$, with $\alpha$ replaced by $-\alpha$. This is also true for $H_3$, which is equal to $H_2$ with the same replacement, as one can check with easy computations. It is clear that the various formulas above are all well defined on, some dense subset of $\ltwo$. In particular, for instance, $H_1$ is well defined on $\Lc_\varphi$, while $H_2$ is well defined on $\Lc_\psi$, the linear spans of $\F_\varphi$ and $\F_\psi$ respectively. The following eigenvalue equations are satisfied:
$$
H_1\varphi_n=E_n\varphi_n, \quad H_2\psi_n=E_n\psi_n, \quad H_3\widetilde\varphi_n=E_n\widetilde\varphi_n,\quad H_4\widetilde\psi_n=E_n\widetilde\psi_n,
$$
$\forall n\geq0$. The Hamiltonians $H_1$ and $H_2$ (and $H_3$ and $H_4$ consequently) can be rewritten as follows:
$$
H_1=H-\sqrt{2}\alpha^2\left(P_\sigma x+\frac{1}{1+\alpha}xP_\sigma\right), \qquad H_2=H^\dagger+\sqrt{2}\alpha^2\left(xP_\sigma +\frac{1}{1-\alpha}P_\sigma x\right),
$$
which contain not only the shifted versions of the Hamiltonian of the harmonic oscillator $H_0$, $H$ and $H^\dagger$, but also present extra terms which manifestly contribute to breaking down {\em even more} self-adjointness of the new Hamiltonians.
We end our analysis noticing that, as expected, when $\alpha=0$ all the operators, eigenfunctions and eigenvalues collapse to those of the standard harmonic oscillator.

\vspace{2mm}

{\bf Remark:--} It is maybe useful to notice here that our choice of the harmonic oscillator as physically motivated example is somehow natural: our analysis originates from the analysis of pseudo-bosons, \cite{bagspringer}, which are strongly connected to particular deformations of the harmonic oscillator. However, it is clear that the same strategy can be adapted to other systems, like, just to cite one, to the P\"osch-Teller potential
$$
V_{\lambda}(x)=\frac{\lambda(\lambda-1)}{\sin^2x}\, ,
$$
where $\lambda\geq1$ and $x\in (0,\pi)$. For $\lambda > 1$, this potential is a regularization of the infinite square well with center in $\frac{\pi}{2}$ and of length $\pi$  ($\lambda =1$) and extrapolates both the latter and the harmonic oscillator (for small $\vert x-\pi/2\vert$). The Hamiltonian of the particle is  $$H_{\lambda}=-\frac{d^2}{d x^2}+V_{\lambda}(x)\, , $$
and the eigenvalue equation for $H_{\lambda}$ can be explicitly solved, (see for instance \cite{antoine_etal_01} and references therein):
\be
H_{\lambda}\,e^{\lambda}_n(x)=\epsilon^{\lambda}_n e^{\lambda}_n(x)\, ,
\label{28}\en
where $\epsilon^{\lambda}_n=(n+\lambda)^2$ and
\be
e^{\lambda}_n(x)=K^{\lambda}_n\,\sin^{\lambda} x\,\mathrm{C}_n^{\lambda}\left(\cos x\right)\,.
\label{29}\en
Here
$$
K^{\lambda}_n=\Gamma(\lambda)\frac{2^{\lambda-1/2}}{\sqrt{\pi}}\sqrt{\frac{n!(n+\lambda)}{\Gamma(n+2\lambda)}}
$$
is a normalization constant and $\mathrm{C}_n^{\lambda}$ is the  Gegenbauer polynomial of degree $n$ \cite{magnus66}. The set $\{e^{\lambda}_n(x)\}$ is an orthonormal basis.

Now, as we have done in (\ref{ex1}), we can act on $e^{\lambda}_n(x)$ with an unbounded multiplication with unbounded inverse $S(x)$. For instance, we could consider $S(x)=\tan\left(x+\frac{\pi}{2}\right)$, and we can repeat the same construction described before.

\section{Convergence in weak sense}
\label{sec3}

In many physical examples the eigenstates of a certain non self-adjoint Hamiltonian do not allow us to expand all vectors of $\Hil$, or even of some dense $\D$, as discussed in Section \ref{sec2}. However, these vectors satisfy a weak version of the resolution of the identity as that in (\ref{dquasi}), see \cite{bagspringer} and references therein. This motivates the analysis proposed in this section. 

\begin{theo}
		\label{th_pertHquasi}
	Let $\F_\varphi$ and $\F_{\psi}$ be two sequences of vectors in $\Hil$ such that $\displaystyle \sum_{n=0}^\infty \pin{f}{\varphi_n}\pin{\psi_n}{g}$ exists for every $f,g\in \H$ and let $0\leq\alpha<1$, $\lambda\neq 0$ be such that 
	\begin{equation}
		\label{pertHquasi}
		\left |\lambda \sum_{n=0}^\infty  \pin{f}{\varphi_n}\pin{\psi_n}{g}-\pin{f}{g} \right |\leq \alpha \|f\|\|g\|, \qquad \forall f, g\in \H. 
	\end{equation}
	Then there exist $\F_{\widetilde \varphi}$ and $\F_{\widetilde \psi}$ such that
	\be\label{res_weak}
	\pin{f}{g}=\sum_{n=0}^\infty \pin{f}{\varphi_n}\pin{\widetilde\varphi_n}{g}=\sum_{n=0}^\infty \pin{f}{\psi_n}\pin{\widetilde\psi_n}{g}, \qquad \forall f, g\in \H.
	\en
	Moreover,  $\pin{\varphi_n}{\widetilde{\varphi}_m}=\pin{\widetilde\psi_n}{{\psi}_m}$  for every $n,m$.  
\end{theo}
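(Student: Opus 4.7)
The plan is to reproduce the strategy of Theorem \ref{theom41}, but in a weaker (sesquilinear) form, because we are only told that the scalar series $\sum_n \pin{f}{\varphi_n}\pin{\psi_n}{g}$ converges, not that $\sum_n \pin{f}{\varphi_n}\psi_n$ converges in norm. So I will not be able to define an operator $Q$ directly; I will instead define a bounded sesquilinear form, represent it by a bounded operator via Riesz, and then invert using Lemma \ref{lem_pert_op}.

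First, I would set
$$B(f,g)=\sum_{n=0}^{\infty}\pin{f}{\varphi_n}\pin{\psi_n}{g},\qquad f,g\in\H.$$
This is well defined by hypothesis and is sesquilinear (antilinear in $f$, linear in $g$) as a pointwise limit of sesquilinear expressions. From \eqref{pertHquasi} I immediately obtain
$$|B(f,g)|\le \tfrac{1+\alpha}{|\lambda|}\,\|f\|\|g\|,$$
so $B$ is bounded. The Riesz representation theorem for bounded sesquilinear forms then yields a unique $Q\in\B(\H)$ with $B(f,g)=\pin{f}{Qg}$ for all $f,g\in\H$. Inequality \eqref{pertHquasi} translates into $|\pin{f}{(\lambda Q-I)g}|\le\alpha\|f\|\|g\|$, hence $\|\lambda Q-I\|\le\alpha<1$. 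Applying Lemma \ref{lem_pert_op} with $T=I$ and $S=\lambda Q-I$ shows that $\lambda Q$, and therefore $Q$, is invertible with bounded inverse.

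Next I would define
$$\widetilde\varphi_n=(Q^{-1})^{\dagger}\psi_n,\qquad \widetilde\psi_n=Q^{-1}\varphi_n,\qquad n\ge0.$$
Replacing $g$ by $Q^{-1}g$ in the identity $\pin{f}{Qg}=\sum_n\pin{f}{\varphi_n}\pin{\psi_n}{g}$ gives
$$\pin{f}{g}=\sum_{n=0}^{\infty}\pin{f}{\varphi_n}\pin{(Q^{-1})^{\dagger}\psi_n}{g}=\sum_{n=0}^{\infty}\pin{f}{\varphi_n}\pin{\widetilde\varphi_n}{g}.$$
For the second identity in \eqref{res_weak}, I take complex conjugates of the original relation, obtaining $\pin{Qg}{f}=\sum_n\pin{g}{\psi_n}\pin{\varphi_n}{f}$, i.e. $\pin{f}{Q^{\dagger}g}=\sum_n\pin{f}{\psi_n}\pin{\varphi_n}{g}$ after swapping the roles of $f$ and $g$; substituting $(Q^{\dagger})^{-1}g$ for $g$ and using $((Q^{\dagger})^{-1})^{\dagger}=Q^{-1}$ yields
$$\pin{f}{g}=\sum_{n=0}^{\infty}\pin{f}{\psi_n}\pin{Q^{-1}\varphi_n}{g}=\sum_{n=0}^{\infty}\pin{f}{\psi_n}\pin{\widetilde\psi_n}{g}.$$
Finally, the equality $\pin{\varphi_n}{\widetilde\varphi_m}=\pin{\widetilde\psi_n}{\psi_m}$ is immediate from the definitions: both sides equal $\pin{Q^{-1}\varphi_n}{\psi_m}$, using $\pin{\varphi_n}{(Q^{-1})^{\dagger}\psi_m}=\pin{Q^{-1}\varphi_n}{\psi_m}$.

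I do not expect a serious obstacle: the existence of the double sum for all $f,g$ is precisely what is needed to produce a bounded sesquilinear form, and from that point onwards the argument is a direct transcription of the proof of Theorem \ref{theom41}, with the Riesz representation theorem replacing the strong-limit construction of $Q$. The only point that needs a little care is the adjoint bookkeeping in the derivation of the second identity of \eqref{res_weak}, and this is dispatched by the elementary identity $((Q^{\dagger})^{-1})^{\dagger}=Q^{-1}$.
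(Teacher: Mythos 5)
Your proposal is correct and follows essentially the same route as the paper: both define the bounded operator $Q$ through the bounded sesquilinear form $\sum_n\pin{f}{\varphi_n}\pin{\psi_n}{g}$, invert it via Lemma \ref{lem_pert_op} with $S=\lambda Q-\1$, $T=\1$, and set $\widetilde\varphi_n,\widetilde\psi_n$ using $Q^{-1}$ and its adjoint. The only cosmetic difference is that your $Q$ (representing the form in its second argument) is the adjoint of the paper's, so the resulting vectors and identities coincide.
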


\begin{proof}
We start observing that, from (\ref{pertHquasi}), 
$$
	\left |\sum_{n=0}^\infty \pin{f}{\varphi_n}\pin{\psi_n}{g} \right |\leq \frac{\alpha+1}{\lambda}\|f\|\|g\|,\qquad \forall f, g\in \H.
$$
Therefore, the operator acting as  $Qf=\sum_{n=0}^\infty \pin{\varphi_n}{f}\psi_n$, $\forall f\in \H$, in weak sense\footnote{This means that $\pin{Qf}{g}=\sum_{n=0}^\infty \pin{f}{\varphi_n}\pin{\psi_n}{g}$, $\forall  f,g\in \H.$}, is bounded.  It is also easy to check that $\|\lambda Q-\1\|<1$. Therefore, by Lemma \ref{lem_pert_op}, $Q$ is invertible and $Q^{-1}\in \B(\Hil)$. Let us now define the vectors
$$
\widetilde{\varphi}_n=Q^{-1}\psi_n, \qquad \widetilde{\psi}_n=(Q^{-1})^\dagger\varphi_n=(Q^\dagger)^{-1}\varphi_n. 
$$
We have, using the definition of $Q$,
$$
\sum_{n=0}^\infty \pin{f}{\varphi_n}\pin{\widetilde\varphi_n}{g}=\sum_{n=0}^\infty \pin{f}{\varphi_n}\pin{\psi_n}{(Q^{-1})^\dagger g}=\pin{Qf}{(Q^{-1})^\dagger g}=\pin{f}{g}, \qquad \forall f,g\in\H.
$$
Similarly we have
$$
\sum_{n=0}^\infty \pin{f}{\psi_n}\pin{\widetilde\psi_n}{g}=\sum_{n=0}^\infty \pin{f}{\psi_n}\pin{\varphi_n}{Q^{-1} g}=\pin{Q^\dagger f}{Q^{-1} g}=\pin{f}{g}, \qquad \forall f,g\in\H.
$$
Finally, our last claim can be proved as in Theorem \ref{theom41}.
\end{proof}

{
Using Theorem \ref{th_pertHquasi} it is possible to construct, in particular, out of two $\Hil$-quasi bases, more $\Hil$-quasi bases. The underlying idea, shown in the following example, is the same of that used in Example \ref{exa1}. 

\begin{exm}\label{exa2}
Let $\F_\eta$ and $\F_{\chi}$ be two $\Hil$-quasi bases:
$$
\sum_{n=0}^\infty \pin{f}{\eta_n}\pin{\chi_n}{g}=\sum_{n=0}^\infty \pin{f}{\chi_n}\pin{\eta_n}{g}=\pin{f}{g}, \qquad \forall f, g\in\Hil.
$$
Let $X,Y\in \B(\Hil)$ be two non necessarily invertible operators with $\|X\|+\|Y\|+\|X\|\|Y\|<1$. Then,  $\F_\varphi=\{\varphi_n=(\1+X)\eta_n\}$ and $\F_\psi=\{\psi_n=(\1+Y)\chi_n\}$ satisfy (\ref{pertHquasi}) for $\lambda=1$, so there exist $\F_{\widetilde \varphi}$ and $\F_{\widetilde \psi}$ such that \eqref{res_weak} holds. As in Example \ref{exa1}, it is possible to take for instance $X=\alpha\,P_\sigma$ and $Y=\beta\,R_\sigma$, $\alpha,\beta>0$ with $\alpha+\beta+\alpha\beta<1$. Thus, we obtain again 
	$$
	\varphi_n=\eta_n+\alpha P_\sigma\eta_n, \qquad \psi_n=\chi_n+\beta R_\sigma\chi_n,
	$$
	and
	$$
	\widetilde{\varphi}_n=\left(\1-\frac{\alpha}{1+\alpha}\,P_\sigma\right)\chi_n, \qquad
	\widetilde{\psi}_n=\left(\1-\frac{\beta}{1+\beta}\,R_\sigma\right)\eta_n.
	$$
\end{exm}
}

\vspace{3mm}

Relaxing condition \eqref{pertHquasi} we can state the following. 

\begin{theo}
	\label{th_pertDquasi}
	Let $\D_1$ and $\D_2$ be two dense subspaces of $\H$ and let $\F_\varphi$ and $\F_{\psi}$ be two sequences of vectors in $\Hil$ such that $\displaystyle \sum_{n=0}^\infty \pin{f}{\varphi_n}\pin{\psi_n}{g}$ for any $f\in \D_1,g\in \D_2$ and let $0\leq\alpha<1$ and $\lambda \neq 0$ be such that 
	\begin{equation}
	\label{pertDquasi}
	\left |\lambda \sum_{n=0}^\infty \pin{f}{\varphi_n}\pin{\psi_n}{g}- \pin{f}{g} \right |\leq \alpha \|f\|\|g\|, \qquad \forall f\in \D_1, g\in \D_2. 
	\end{equation}
	Then two sequences $\F_{\widetilde \varphi}$, $\F_{\widetilde \psi}$ and two subspaces $\widetilde\D_1$, $\widetilde\D_2$ do exist such that
	\begin{equation}\label{pertDquasi1}
	\pin{f}{g}=\sum_{n=0}^\infty \pin{f}{\varphi_n}\pin{\widetilde\varphi_n}{g}, \quad \forall f\in \D_1,g \in \widetilde\D_2
	\end{equation}
	and
	\begin{equation}\label{pertDquasi2}
	\pin{f}{g}=\sum_{n=0}^\infty \pin{f}{\psi_n}\pin{\widetilde\psi_n}{g}, \quad \forall f\in \widetilde\D_1,g \in \D_2.
	\end{equation}
	Moreover,  $\pin{\varphi_n}{\widetilde{\varphi}_m}=\pin{\widetilde\psi_n}{{\psi}_m}$ for every $n,m$. 
\end{theo}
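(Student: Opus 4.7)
The plan is to follow the proof of Theorem \ref{th_pertHquasi}, with extra care to handle the fact that the hypothesis is only available on $\D_1\times\D_2$ rather than on the whole of $\H\times\H$. First I would define the sesquilinear form $B(f,g):=\sum_{n=0}^\infty \pin{f}{\varphi_n}\pin{\psi_n}{g}$ for $f\in\D_1$, $g\in\D_2$. Condition \eqref{pertDquasi} combined with $|\pin{f}{g}|\le\|f\|\|g\|$ yields $|B(f,g)|\le \frac{1+\alpha}{|\lambda|}\|f\|\|g\|$ on $\D_1\times\D_2$, and since $\D_1$ and $\D_2$ are dense, $B$ extends uniquely to a bounded sesquilinear form on $\H\times\H$. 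Riesz representation then produces an operator $Q\in\B(\H)$ such that $\pin{Qf}{g}$ equals the extended form for all $f,g\in\H$. The inequality \eqref{pertDquasi}, extended by density, reads $\|\lambda Q-I\|\le\alpha<1$, so Lemma \ref{lem_pert_op} applied with $T=I$ and $S=\lambda Q-I$ gives $Q^{-1}\in\B(\H)$.

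By direct analogy with Theorem \ref{th_pertHquasi} I would then set
\[
\widetilde\varphi_n:=Q^{-1}\psi_n,\qquad \widetilde\psi_n:=(Q^{-1})^\dagger\varphi_n,\qquad n\ge 0.
\]
The identity $\pin{\varphi_n}{\widetilde\varphi_m}=\pin{\varphi_n}{Q^{-1}\psi_m}=\pin{(Q^{-1})^\dagger\varphi_n}{\psi_m}=\pin{\widetilde\psi_n}{\psi_m}$ then comes for free and settles the last claim of the statement.

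For \eqref{pertDquasi1} I would take $\widetilde\D_2:=Q^\dagger(\D_2)=\{g\in\H:(Q^{-1})^\dagger g\in\D_2\}$, which is dense in $\H$ because $Q^\dagger$ is a bounded bijection of $\H$. For $f\in\D_1$ and $g\in\widetilde\D_2$ the series
\[
\sum_{n=0}^\infty \pin{f}{\varphi_n}\pin{\widetilde\varphi_n}{g}=\sum_{n=0}^\infty \pin{f}{\varphi_n}\pin{\psi_n}{(Q^{-1})^\dagger g}
\]
converges by the original hypothesis applied to $(f,(Q^{-1})^\dagger g)\in\D_1\times\D_2$, and its value is $\pin{Qf}{(Q^{-1})^\dagger g}=\pin{f}{g}$. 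For \eqref{pertDquasi2} I would use the conjugate form of the hypothesis, namely $\sum_n\pin{k}{\psi_n}\pin{\varphi_n}{h}=\pin{k}{Qh}$ for $k\in\D_2$ and $h\in\D_1$; applied with $k=f$ and $h=Q^{-1}g$ this rewrites $\sum_n\pin{f}{\psi_n}\pin{\widetilde\psi_n}{g}$ as $\pin{f}{QQ^{-1}g}=\pin{f}{g}$. The subspace $\widetilde\D_1$ is then defined so that this substitution is legal, namely as an appropriate (dense) subspace obtained by pushing $\D_1$ forward through $Q$ and intersecting with $\D_2$ so as to guarantee simultaneously $f\in\D_2$ and $Q^{-1}g\in\D_1$.

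The main delicate point I expect is precisely this bookkeeping of the dual domains $\widetilde\D_1$ and $\widetilde\D_2$: the operator inequality $\|\lambda Q-I\|<1$ extends freely to $\H$ by density, but pointwise convergence of the reconstruction series survives only through preimages under $Q$ or $Q^\dagger$, so the $\widetilde\D_j$ cannot be chosen equal to the original $\D_j$ without additional assumptions relating $Q$ to the original dense subspaces.
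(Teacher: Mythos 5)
Your construction is the paper's own, step for step: the form $B(f,g)=\sum_n\pin{f}{\varphi_n}\pin{\psi_n}{g}$ is bounded by $(1+\alpha)/|\lambda|$ on $\D_1\times\D_2$, extends by density and is represented by a bounded operator $Q$ (the paper's $\overline{Q}$); Lemma \ref{lem_pert_op} with $T=\1$, $S=\lambda Q-\1$ gives $Q^{-1}\in\B(\H)$; and the choices $\widetilde\varphi_n=Q^{-1}\psi_n$, $\widetilde\psi_n=(Q^{-1})^\dagger\varphi_n$, $\widetilde\D_2=Q^\dagger\D_2$ coincide with the paper's. Your verification of \eqref{pertDquasi1} and of $\pin{\varphi_n}{\widetilde\varphi_m}=\pin{\widetilde\psi_n}{\psi_m}$ is correct.

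The one place you go astray is $\widetilde\D_1$. Your computation for \eqref{pertDquasi2} correctly shows that $\sum_n\pin{f}{\psi_n}\pin{\widetilde\psi_n}{g}=\pin{f}{g}$ precisely when $f\in\D_2$ and $Q^{-1}g\in\D_1$, i.e.\ $g\in Q\D_1$. The first is a condition on $f$, the second a condition on $g$: no choice of the set over which $f$ ranges can ``guarantee'' that $Q^{-1}g\in\D_1$, so defining $\widetilde\D_1$ as $Q\D_1$ intersected with $\D_2$ does not yield \eqref{pertDquasi2} for all $g\in\D_2$, and the intersection serves no purpose (it need not even be dense). What your computation actually proves is \eqref{pertDquasi2} with $f$ ranging over $\D_2$ and $g$ over $Q\D_1$ — and $\widetilde\D_1=Q\D_1$ is exactly the paper's definition. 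The printed quantifiers in \eqref{pertDquasi2} appear to have the two arguments interchanged relative to this, so your hesitation is tracking a glitch in the statement rather than a defect of the method. Set $\widetilde\D_1=Q\D_1$ without the intersection, read \eqref{pertDquasi2} with $f\in\D_2$ and $g\in\widetilde\D_1$ (or take complex conjugates of both sides), and your argument closes.
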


\begin{proof}
	We only sketch the proof being similar to that of Theorem \ref{th_pertHquasi}.  The operator $Q:\D_1\to \H$ defined by 
	$\displaystyle \pin{Qf}{g}=\sum_{n=0}^\infty \pin{f}{\varphi_n}\pin{\psi_n}{g}$, for $ f\in \D_1,g\in \D_2$, is bounded and admits a bounded and invertible extension $\overline{Q}:\H\to \H$ with $\overline{Q}^{-1}\in \B(\H)$. The sequences $\F_{\widetilde \varphi}$, $\F_{\widetilde \psi}$ and the subspaces $\widetilde\D_1$, $\widetilde\D_2$ are given by 
	\[
	\widetilde{\varphi}_n=\overline{Q}^{-1}\psi_n, \qquad \widetilde{\psi}_n=(\overline{Q}^{-1})^\dagger\varphi_n=({Q}^\dagger)^{-1}\varphi_n, \qquad \widetilde \D_1=\overline{Q} \D_1={Q} \D_1, \qquad \widetilde \D_2={Q}^\dagger \D_2. \qedhere
	\]
\end{proof}

Of course, Remarks \ref{rem2.8}.2 and \ref{rem2.8}.3  hold also in the context of Theorems \ref{th_pertHquasi} and \ref{th_pertDquasi}. We observe that an explicit characterization of $\widetilde \D_1$ and $\widetilde \D_2$ is not so easy and depends strongly on the explicit form of $Q$, which, in turns, depend clearly on $\F_\varphi$ and $\F_{\psi}$.

\begin{exm}\label{exl2R2}
	We come back to the sequences defined in Example \ref{exl2R1}, i.e. $\F_\varphi=\{\varphi_n\}$ and $\F_\psi=\{\psi_n\}$, 
	\be
	\varphi_n(x)=\rho_1(x)\,c_n(x)\qquad \psi_n(x)=\rho_2(x)\,c_n(x),
	\label{f2.2}\en
	where $c=\{c_n\}$ is an orthonormal basis of $\ltwo$, $\rho_j$, $j=1,2$, are two Lebesgue-measurable functions with real values, with $\n{\rho_1\rho_2}_\infty <\infty$ and $ \rho_1(x)\rho_2(x)>c>0$ in $\R$. However, here we do not assume that $\rho_1$ or $\rho_2$ belongs to $\mathcal{L}^\infty(\R)$ as we did in Example \ref{exl2R1}. Still, $\rho_1$ and $\rho_2$ are chosen to ensure that $\varphi_n,\psi_n\in \Lc^2(\R)$ for every $n$. Let us introduce the subspaces
	$$
	\D_j=\left\{f\in\ltwo:\,\rho_jf\in\ltwo\right\}\qquad j=1,2.
	$$
	First of all we observe that, as observed in  Example \ref{exl2R1}, they are both dense subspaces in $\Lc^2(\mathbb{R})$ and moreover if $\rho_j\in\Lc^\infty(\mathbb{R})$, then $\D_j=\ltwo$. \\	
	Let us take $f\in \D_1, g\in\D_2$. Then,  because of (\ref{f2.2}),
	$$
	\sum_{n=0}^\infty \pin{f}{\varphi_n}\pin{\psi_n}{g}=\pin{\rho_1 f}{ \rho_2 g}
	$$
	where the right-hand side is well defined. Indeed, our present assumptions ensure that $\rho_1f, \rho_2g\in \ltwo$. 
	In other words, the operator $Q$ in the proof of Theorem \ref{th_pertDquasi} is the multiplication operator by $\rho_1\rho_2$ on $\D_1$, while $\overline{Q}$ is the multiplication operator by $\rho_1\rho_2$ on $\H$. Now let $0<\lambda<\frac{2}{\n{\rho_1\rho_2}_\infty}$. 
	We have
	$$
	\left |\lambda\sum_{n=0}^\infty \pin{f}{\varphi_n}\pin{\psi_n}{g}-\pin{f}{g} \right |=\left|\pin{(\lambda\rho_1 \rho_2-1 )f}{g}\right|\leq \alpha\|f\|\|g\|, \qquad \forall f\in \D_1, g\in\D_2
	$$
	where $\alpha=\|\lambda\rho_1 \rho_2-1\|_\infty$ which is smaller than $1$ by our hypothesis on $\rho_1,\rho_2$. 
	Thus \eqref{pertDquasi} is satisfied, so \eqref{pertDquasi1} and \eqref{pertDquasi2} hold. In particular, it is easy to see that also in this case $\widetilde \varphi_n(x)=\rho_1^{-1}(x)c_n(x)$ and $\widetilde \psi_n(x)=\rho_2^{-1}(x)c_n(x)$. 
\end{exm}


\section{Working with distributions: a concrete example}
\label{sec4}

In this section we will show that the results given in Theorem \ref{th_pertDquasi} can be extended outside $\Hil$. This will be discussed with an explicit example, recently considered in \cite{baggarg2022,bagspringer}. For that, to keep this paper self-contained, we start giving here few essential steps of this example.

Let us consider the following operators defined on $\Hil=\Lc^2(\mathbb{R})$: $\hat xf(x)=xf(x)$, $(\hat Dg)(x)=g'(x)$, the derivative of $g$, for all $f\in \D(\hat x)=\{h\in\Lc^2(\mathbb{R}): xh(x)\in \Lc^2(\mathbb{R} \}$ and $g\in \D(\hat D)=\{h\in\Lc^2(\mathbb{R}): h'\in \Lc^2(\mathbb{R}) \}$. Here $h'$ is the ordinary derivative\footnote{Later in this chapter we will work with the {\em weak derivative} of certain distributions.} of the square-integrable, differentiable, function $h$. Of course,  $\Sc(\mathbb{R})\subset \D(\hat x)\cap \D(\hat D)$. The adjoints of $\hat x$ and $\hat D$ in $\Hil$ are well known: $\hat x^\dagger=\hat x$ and $\hat D^\dagger=-\hat D$. We have $[D,x]f(x)=f(x)$, for all $f\in\Sc(\mathbb{R})$. In \cite{baggarg2022,bagspringer}, using the general settings of pseudo-bosonic operators, \cite{bagspringer}, the following sequences have been constructed, using the pseudo-bosonic ladder operators $a=\hat D$ and $b=\hat x$:
$$
\F_{\Phi}=\left\{\Phi(x)=\frac{x^n}{\sqrt{n!}}, \, n\geq0\right\}, \qquad \F_{\eta}=\left\{\eta_n(x)=\frac{(-1)^n}{\sqrt{n!}}\,\delta^{(n)}(x), \, n\geq0\right\}.
$$
We refer  to  \cite{baggarg2022,bagspringer} for the details of their construction, and for many of their properties. Here we just observe, as it is clear, that $\Phi_n,\eta_n\notin\ldue$, for no value of $n$. Despite of this, however, a sesquilinear form extending the usual scalar product in $\ldue$ can be introduced, and the sequences $\F_{\Phi}$ and $\F_{\eta}$ are biorthonormal, with respect to this form.

To achieve this aim, we use an idea discussed  in \cite{vlad}. This is based on the simple fact that the scalar product between two {\em good } functions $f(x)$ and $g(x)$, for instance $f(x),g(x)\in\Sc(\mathbb{R})$, can be written in terms of a convolution between $\overline{f(x)}$ and $\tilde{g}(x)=g(-x)$: $\left<f,g\right>=(\overline{f}* \tilde{g})(0)$. Hence it is natural to define the scalar product between two elements $F(x), G(x)\in\Sc'(\mathbb{R})$ as the following convolution\footnote{We are using $F(x)$ and $G(x)$ to indicate the tempered distributions since we have in mind only those {\em regular distributions} for which this is possible. This is possible since we are {\bf not} claiming that all the elements of $\Sc'(\mathbb{R})$ can be multiplied.}:
\be
\left<F,G\right>=(\overline{F}* \tilde{G})(0),
\label{exdist1}\en
whenever this convolution exists, which is not always true. Notice that, in order to compute $\left<F,G\right>$, it is first necessary to compute $(\overline{F}* \tilde{G})[f]$, $f(x)\in\Sc(\mathbb{R})$, and this can be done by using the equality $(\overline{F}* \tilde{G})[f]=\left<F,G*f\right>$ which, again, is not always well defined. It is maybe useful to stress that $(\overline{F}* \tilde{G})[f]$ represents here the action of $(\overline{F}* \tilde{G})(x)$ on the function $f(x)$.

With this definition we  find that
\be\left<\Phi_n,\eta_m\right>=\delta_{n,m},
\label{exdist2}\en
as claimed before: $\F_{\Phi}$ and $\F_{\eta}$ are biorthonormal, with respect to the scalar product we are considering here.

\vspace{2mm}

Let us introduce the following set of functions:
\be
\D=\Lc^1(\mathbb{R})\cap \Lc^\infty(\mathbb{R})\cap A(\mathbb{R}),
\label{exdist3}
\en
where $A(\mathbb{R})$ is the set of  entire real analytic functions, which admit expansion in Taylor series, convergent everywhere in $\mathbb{C}$. Of course $\D$ contains many functions of $\Sc(\mathbb{R})$, but not all. For instance, the function 
$$
f(x)=\left\{
\begin{array}{ll}
	exp\left\{-\left(x^2+\frac{1}{x^2}\right)\right\}, \qquad x\neq0\\
	0, \hspace{3.7cm} x=0,\\
\end{array}
\right.
$$
is in $\scr$. But, since all its derivatives in $x=0$ are zero, $f(x)\notin A(\mathbb{R})$.
 Also, any function in $\D$ automatically belongs to $\ldue$. Indeed, taking $f\in\D$ we have, with standard notation\footnote{Notice that in this formula we are using $\|.\|_1$, $\|.\|_2$ and $\|.\|_\infty$ since they all appear. In the rest of the paper we have mostly simply used $\|.\|$ to indicate the norm in $\ldue$, since there is no possible misunderstanding.},
$$
\|f\|_2^2=\int_{\mathbb{R}}|f(x)|^2dx\leq \sup_{x\in\mathbb{R}}|f(x)|\int_{\mathbb{R}}|f(x)|dx=\|f\|_\infty\|f\|_1<\infty.
$$
 In \cite{baggarg2022,bagspringer} it has been proved that, for all $f,g\in\D$, $\left<f,\Phi_n\right>$ and $\left<\eta_n,g\right>$ are well defined, and that
\be
\left<f,g\right>=\sum_{n=0}^\infty\,\left<f,\Phi_n\right>\left<\eta_n,g\right>=\sum_{n=0}^\infty\,\left<f,\eta_n\right>\left<\Phi_n,g\right>,
\label{exdist4}\en
so that $(\F_{\Phi},\F_{\eta})$ are $\D$-quasi bases. It should be stressed that, since $\scr\not\subset\D$, it is not clear if $\D$ is dense or not in $\ldue$ but this is not very relevant in the present context, where the role of the Hilbert space is not so central.

Now, as in Example \ref{exa2}, we first deform $(\F_{\Phi},\F_{\eta})$, and then we use this (class of) deformations to discuss how the main results in Theorem \ref{th_pertDquasi} can be extended outside $\Hil$.

Let us consider two linear operators $M$ and $L$ on $\ldue$, with adjoints $M^\dagger$ and $L^\dagger$, satisfying the following features:

{\bf p1.} $L$ and $M$ leave $\D$ stable;

{\bf p2.} $M^\dagger \Phi_n$ and $L^\dagger\eta_n$ are well defined in $\Sc'(\mathbb{R})$, $\langle f,M^\dagger \Phi_n\rangle$ and $\langle f,L^\dagger \eta_n\rangle$ exist for all $f\in\D$ and $\forall n\geq0$, and the following equalities are satisfied:
\be
\langle Mf,\Phi_n\rangle=\langle f,M^\dagger \Phi_n\rangle, \qquad \langle Lf,\eta_n\rangle=\langle f,L^\dagger \eta_n\rangle.
\label{exdist5}\en
Notice that, while it is granted the existence of $\langle Mf,\Phi_n\rangle$ and $\langle Lf,\eta_n\rangle$, because of what proved in \cite{baggarg2022} and for the stability of $\D$ under $L$ and $M$, the existence of the right-hand sides of the equalities in (\ref{exdist5}) is not automatic. 

{\bf p3.} We have
\be\|M^\dagger L-\1\|<1
\label{exdist6}\en

Because of {\bf p2} the new vectors $\varphi_n=M^\dagger \Phi_n$ and $\psi_n=L^\dagger\eta_n$ are well defined in $\Sc'(\mathbb{R})$. Now, using (\ref{exdist4}), together with {\bf p2.} again, we can easily check that
$
\sum_{n=0}^\infty\,\left<f,\varphi_n\right>\left<\psi_n,g\right>=\langle f, M^\dagger Lg\rangle,
$
$\forall f,g\in\D$, so that
$$
\left|\sum_{n=0}^\infty\,\left<f,\varphi_n\right>\left<\psi_n,g\right>-\langle f, g\rangle\right|=\left|\langle f,(M^\dagger L-\1) g\rangle\right|\leq \|M^\dagger L-\1\|\|f\|\|g\|.
$$
This inequality, because of (\ref{exdist6}) is exactly the one in (\ref{pertDquasi}), with $\lambda=1$ and $\D_1=\D_2=\D$. The only (crucial!) difference is that the vectors $\varphi_n$ and $\psi_n$, except for quite peculiar choices of $M$ and $L$ (which, even if they exist, are not useful for us, here), are not in $\ldue$. Now the question we want to answer is if and how it is possible to construct the {\em dual sequences} of $(\F_{\varphi},\F_{\psi})$, i.e. two new sequences of vectors $\F_{\widetilde\varphi}=\{\widetilde\varphi_n\}$ and  $\F_{\widetilde\psi}=\{\widetilde\psi_n\}$ (belonging to some space to be identified) such that the analogous of (\ref{pertDquasi1}) and (\ref{pertDquasi2}) are satisfied.

This construction can be carried out easily if we slightly modify {\bf p1.} and {\bf p2.} as follows:

{\bf p1'.} $L$ and $M$, invertible, leave $\D$ stable together with $(L^{-1})^\dagger$ and $(M^{-1})^\dagger$;

{\bf p2'.} $M^\dagger \Phi_n$, $L^\dagger\eta_n$, $M^{-1} \eta_n$ and $L^{-1}\Phi_n$ are well defined in $\Sc'(\mathbb{R})$, $\langle f,M^\dagger \Phi_n\rangle$, $\langle f,L^\dagger \eta_n\rangle$, $\langle f,M^{-1} \eta_n\rangle$ and $\langle f,L^{-1} \Phi_n\rangle$ exist for all $f\in\D$ and $\forall n\geq0$, and the following equalities are satisfied:
\be
\langle Mf,\Phi_n\rangle=\langle f,M^\dagger \Phi_n\rangle, \qquad\langle (L^{-1})^\dagger f,\Phi_n\rangle=\langle f,L^{-1}\Phi_n\rangle, 
\label{exdist7}\en
as well as
\be
 \langle Lf,\eta_n\rangle=\langle f,L^\dagger \eta_n\rangle, \qquad \langle (M^{-1})^\dagger f,\eta_n\rangle=\langle f,M^{-1} \eta_n\rangle.
\label{exdist8}\en
Of course, our previous comments on the existence of these quantities can be repeated.

Under {\bf p1'.} and {\bf p2'.} it is easy to identify $\widetilde\varphi_n$ and  $\widetilde\psi_n$. Indeed, if we define
\be
\widetilde\varphi_n=M^{-1}\eta_n, \qquad \widetilde\psi_n=L^{-1}\Phi_n,
\label{exdist9}
\en
we see first that these are well defined vectors of $\Sc'(\mathbb{R})$. Secondly, taking $f,g\in\D$ and using (\ref{exdist8}) and (\ref{exdist9}), together with $\varphi_n=M^\dagger \Phi_n$, we have
$$
\sum_{n=0}^\infty\,\left<f,\varphi_n\right>\left<\widetilde\varphi_n,g\right>=\sum_{n=0}^\infty\,\left<Mf,\Phi_n\right>\left<\eta_n,(M^{-1})^\dagger g\right>=\left<Mf,(M^{-1})^\dagger g\right>=\left<f, g\right>.
$$
Here we have also used (\ref{exdist4}), and the fact that both $Mf$ and $(M^{-1})^\dagger g$ belong to $\D$, because of {\bf p1'.}. Similarly we can check that, again for $f,g\in\D$, $\sum_{n=0}^\infty\,\left<f,\psi_n\right>\left<\widetilde\psi_n,g\right>=\left<f,g\right>$. Then we have recovered similar results as those in Theorem \ref{th_pertDquasi}, but in a different function space.

\subsection{A simple class of choices of $M$ and $L$}

Though the above hypotheses on $M$ and $L$ can appear rather strong, it is not hard to construct examples which work nicely, in our context. The easiest choice is possibly the following:
\be
Mf(x)=m(x)f(x), \qquad Lf(x)=l(x)f(x),
\label{exex1}\en
for all $f\in\ldue$. Here $m,l$ are real, invertible, functions such that $m,l,m^{-1},l^{-1}\in\Lc^\infty(\mathbb{R})$, and they all in $A(\mathbb{R})$. It is clear that $M$ and $L$ are bounded, with bounded inverse, and self-adjoint: $M=M^\dagger$ and $L=L^\dagger$. It is also easy to see that {\bf p1'.} is satisfied. This is because, for instance, if $f\in\D$, then $Mf(x)=m(x)f(x)$ also belongs to $\D$: $mf\in\Lc^\infty(\mathbb{R})\cap A(\mathbb{R})$, being the product of two such functions. Moreover, since
$$
\|Mf\|_1=\int_{\mathbb{R}}|m(x)f(x)|\,dx\leq \|m\|_\infty\|f\|_1<\infty,
$$
$Mf\in\Lc^1(\mathbb{R})$, as we had to check. Similar estimates hold for the other operators in {\bf p1'.}. 

As for {\bf p2'.}, first we notice that $m \Phi_n$, $l\eta_n$, $m^{-1} \eta_n$ and $l^{-1}\Phi_n$ are all well defined in $\Sc'(\mathbb{R})$, because of the properties of $m$ and $l$, and of the definition of $\Phi_n$ and $\eta_n$. The second part of  {\bf p2'.}, i.e. the fact that $\langle f,M^\dagger \Phi_n\rangle$, $\langle f,L^\dagger \eta_n\rangle$, $\langle f,M^{-1} \eta_n\rangle$ and $\langle f,L^{-1} \Phi_n\rangle$ all exist, can be deduced from what was shown in \cite{bagspringer}, where the convergence of a series involving this kind of scalar products is proven. The equalities in (\ref{exdist7}) and (\ref{exdist8}) are trivial.
 
Finally, $m$ and $l$ must still satisfy the bound in (\ref{exdist6}): any such pair can be used to deform the original $\D$-quasi basis $(\F_{\Phi},\F_{\eta})$, to find two new $\D$-quasi bases $(\F_{\varphi},\F_{\widetilde\varphi})$ and $(\F_{\psi},\F_{\widetilde\psi})$ out of them. If we take, for instance, $l(x)=\frac{1}{2m(x)}$, it is easy to check that $\|M^\dagger L-\1\|\leq\|ml-1\|_\infty=\frac{1}{2}$, so that (\ref{exdist6}) is satisfied for all choices of $m$. Of course, other choices of $m$ and $l$, and more generally other form of operators $M$ and $L$ could be found. This is part of our future projects.

\section{Conclusions}\label{sec6}

We have discussed how $\D$-quasi bases and more general sequences can be perturbed giving rise to other sequences which share with the original ones the possibility of obtaining closure relations of different kind, in $\Hil$ or even outside $\Hil$. In particular, we have shown how sequences which are not, for instance, $\D$-quasi bases, under some natural conditions allow us to define other sequences which are $\D$-quasi bases, when considered in pairs with the original ones.

Some examples are discussed. In particular, one quantum mechanically example connected to the shifted harmonic oscillator has been explored in some details.

Among our plans for the future, we would like to explore the role of (pseudo-)bosonic ladder operators in our framework, also in view of the possible implications in the analysis of manifestly non self-adjoint Hamiltonians.  Also, we would like to apply the perturbation results to more generic multipliers \cite{corso_mult1,corso_mult2}; in fact, Lemma \ref{lem_pert_op} has been applied in \cite{corso_mult2} to study an aspect of the spectra of dual frame multipliers.

In Section \ref{sec4} we have given some ideas to work with distributions, showing in this way that the role of Hilbert spaces is not really essential, sometimes. In this perspective, another plan is to extend the perturbation result of the first section in the setting of triplets $\D[t] \subset \H \subset \D^\times[t^\times]$, as considered in \cite{ttt_19},  and chains of Hilbert spaces, as considered in \cite{act}.  The idea of investigating perturbation results in the distributional context is supported by the recent investigation of bases and frames in this context \cite{ttt_19,cor_ts}.

\section*{Acknowledgements}

F.B.  acknowledges partial financial support from Palermo University (via FFR2021 "Bagarello") and from G.N.F.M. of the INdAM. R. C. acknowledges partial financial support from Palermo University (via FFR2021 "Corso") and from G.N.A.M.P.A. of the INdAM. This work has been done within the activities of the Gruppo UMI Teoria dell’Approssimazione e Applicazioni.


\begin{thebibliography}{100}
	\addcontentsline{toc}{chapter}{References}

\bibitem{antoine_etal_01} J.-P. Antoine,
J.-P. Gazeau, J. R. Klauder,  P. Monceau, and K. A. Penson,
{\em Temporally stable coherent states for infinite well and P\"{o}schl–Teller potentials}, J.  Math. Phys. \textbf{42}, 2349 (2001)


\bibitem{act}
J.-P.  Antoine, R. Corso, C. Trapani, {\it Lower Semi-frames, Frames, and Metric Operators},  Mediterr. J. Math. 18, 11 (2021)

\bibitem{bag2013} F. Bagarello, {\em More mathematics on pseudo-bosons},  J. Math. Phys., {\bf 54}, 063512 (2013)

%
\bibitem{bagjmaa2022} F. Bagarello,  {\em
	A class of weak pseudo-bosons and their bi-coherent states},   JMAA, {\bf 516}. Issue 2, 126531 (2022)

\bibitem{baggarg2022} F. Bagarello, F. Gargano, {\em Bi-coherent states as generalized eigenstates of the position and the momentum operators}, ZAMP, {\bf 73}. 119 (2022)

	
\bibitem{bagspringer} F. Bagarello, {\em Pseudo-bosons and their coherent states}, Springer (2022)

\bibitem{bagbookPT} F. Bagarello, J. P. Gazeau, F. H. Szafraniec, M. Znojil Eds., {\em Non-selfadjoint operators in quantum physics: Mathematical aspects}, John Wiley and Sons (2015)


\bibitem{beric}
D. Baki\'c, T. Beri\'c, {\it On excesses of frames}, Glasnik Matematicki, 50, 415–427, (2015).

	

%

\bibitem{casazza}
P. Casazza, O. Christensen, S. Li, A. Lindner, {\it Riesz-Fischer sequences and lower	frame bounds}, Z. Anal. Anwend. 21(2), 305–314, (2002).

\bibitem{cfl}
P. G. Casazza, D. Freeman and R. G. Lynch, {\it Weaving Schauder frames}, J. Approx. Theory, 211, (2016), 42--60.

\bibitem{Chris}
O. Christensen, {\it An Introduction to Frames and Riesz Bases}, second expanded edition,  Birkhäuser, Boston, (2016).

\bibitem{corso_seq}
R. Corso, {\it Sesquilinear forms associated to sequences on Hilbert spaces}, Monatshefte f\"{u}r Mathematik, {189}(4), (2019), 625--650.

\bibitem{corso_mult1}
R. Corso, {\it On some dual frames multipliers with at most countable spectra}, Ann. Mat. Pura Appl., 201(4), (2022),
1705--1716.

\bibitem{corso_mult2}
R. Corso, {\it Localization of the spectra of dual frames multipliers}, Constructive Mathematical Analysis,  5(4), 238--245, (2022).

\bibitem{corso_seq2}
R. Corso, {\it Generalized frame operator, lower  semi-frames and sequences of translates}, Mathematische Nachrichten, in press (2022).

\bibitem{cor_ts}
R. Corso, F. Tschinke, {\it Some notes about distribution frame multipliers}, in: Landscapes of Time-Frequency Analysis, vol. 2, P. Boggiatto, T. Bruno, E. Cordero, H.G. Feichtinger, F. Nicola, A. Oliaro, A. Tabacco, M. Vallarino (Ed.), Applied and Numerical Harmonic Analysis Series, Springer,  (2020).

\bibitem{dufsch} J. Duffin, A.C. Schaeffer {\em A class of nonharmonic Fourier series}, Trans. Amer. Math. Soc. {72} (1952), 341-366.

\bibitem{inoue1} H. Inoue, {\em General theory of regular biorthogonal pairs and its physical operators}, J. Math. Phys. {\bf 57}, 083511 (2016)
\bibitem{inoue2} H. Inoue, M. Takakura,  {\em Non-self-adjoint hamiltonians defined by generalized Riesz bases}, J. Math. Phys. {\bf 57}, 083505 (2016)

\bibitem{kamuda} A. Kamuda, S. Kuzhel, {\em Towards generalized Riesz systems theory}, Complex Analysis and Operator Theory, {\bf 14}, 28 (2020) 

\bibitem{Kato}
T. Kato, {\it Perturbation Theory for Linear Operators}, Springer, Berlin, 1966.	

\bibitem{magnus66}   W. Magnus, F. Oberhettinger, and R. P.  Soni, {\em Formulas and Theorems for
	the Special Functions of Mathematical Physics},
 Springer-Verlag,  Berlin, Heidelberg and New York, (1966).


\bibitem{mess} A. Messiah, {\em Quantum mechanics}, vol. 2, North Holland Publishing Company, Amsterdam, (1962)



\bibitem{paleywiener}
R. E. A. C. Paley, N. Wiener, {\it Fourier Transforms in the Complex Domain}, Amer. Math. Soc. Coll. Publ.,
19, New York, (1934)

\bibitem{schm}
K. Schm\"{u}dgen, {\it Unbounded Self-adjoint Operators on Hilbert Space}, Springer, Dordrecht, (2012).

\bibitem{balsto}
D. T. Stoeva, P. Balazs, {\it Canonical forms of unconditionally convergent multipliers}, J. Math. Anal. Appl., 399(1), 252–259, (2013).

\bibitem{ttt_19} C. Trapani, S. Triolo and F. Tschinke, {\it Distribution frames and bases}, J. Fourier Anal. Appl. 25, (2019), 2109-2140.

\bibitem{vlad} V.S. Vladimirov, {\em Le distribuzioni nella fisica matematica}. MIR, Moscow (1981)


\bibitem{young}
R. Young, {\it An Introduction to Nonharmonic Fourier Series},
Academic, New York (1980) (revised first edition 2001).





\end{thebibliography}
\end{document}